\title{Fast Deterministic Algorithms for Highly-Dynamic Networks}
\titlerunning{Fast Deterministic Algorithms for Highly-Dynamic Networks} 
\author{Keren Censor-Hillel}
	{Technion---Israel Institute of Technology}
	{ckeren@cs.technion.ac.il}{}
	{This project has received funding from the European Union's Horizon
	2020 Research And Innovation Program under grant agreement no.755839.}
\author{Neta Dafni}
	{Technion---Israel Institute of Technology}
	{netad@cs.technion.ac.il}{}{}
\author{Victor I. Kolobov}
	{Technion---Israel Institute of Technology}
	{tkolobov@cs.technion.ac.il}{}{}
\author{Ami Paz}
	{Faculty of Computer Science, Universit\"at Wien}
	{ami.paz@univie.ac.at}{}
	{Supported by the Austrian Science Fund (FWF): P 33775-N, Fast Algorithms for a Reactive Network Layer.}
\author{Gregory Schwartzman}
	{Japan Advanced Institute of Science and Technology}
	{greg@jaist.ac.jp}{}
	{This work was supported by JSPS Kakenhi Grant Number JP19K20216 and~JP18H05291.}
\authorrunning{Censor-Hillel et al.} 
\keywords{Distributed dynamic graph algorithms} 
\newcommand{\OPT}{\textnormal{OPT}}
\newenvironment{theorem-repeat}[1]{\begin{trivlist}
		\item[\hspace{\labelsep}{\bf\noindent Theorem \ref{#1} .}]\em }%
	{\end{trivlist}}
	\newenvironment{corollary-repeat}[1]{\begin{trivlist}
		\item[\hspace{\labelsep}{\bf\noindent Corollary \ref{#1} .}]\em }%
	{\end{trivlist}}
\newcommand{\qedsymb}{\qed}
\newcommand{\true}{\texttt{true}}
\newcommand{\false}{\texttt{false}}
\newcommand{\matched}{\texttt{matched-to-}}
\newcommand{\unmatched}{\texttt{unmatched}}
\DeclareMathOperator{\prep}{\Phi_\textnormal{prepare}}
\DeclareMathOperator{\fix}{\Phi_\textnormal{fix}}
\DeclareMathOperator{\incorrect}{\texttt{incorrect}}
\DeclareMathOperator{\changes}{\texttt{changes}}
\DeclareMathOperator{\epochs}{\texttt{epochs}}
\begin{document}

\maketitle

\begin{abstract}
This paper provides an algorithmic framework for obtaining fast distributed algorithms for a highly-dynamic setting, in which \emph{arbitrarily many} edge changes may occur in each round. Our algorithm significantly improves upon prior work in its combination of (1) having an $O(1)$ amortized time complexity, (2) using only $O(\log{n})$-bit messages, (3) not posing any restrictions on the dynamic behavior of the environment, (4) being deterministic, (5) having strong guarantees for intermediate solutions, and (6) being applicable for a wide family of tasks.

The tasks for which we deduce such an algorithm are maximal matching, $(degree+1)$-coloring, 2-approximation for minimum weight vertex cover, and maximal independent set (which is the most subtle case). For some of these tasks, node insertions can also be among the allowed topology changes, and for some of them also abrupt node deletions.
\end{abstract}

\section{Introduction}
\label{section:introduction}
\vspace{-0.1cm}
We present a family of deterministic distributed algorithms that rapidly fix solutions for fundamental tasks even in a highly-dynamic environment. Specifically, we provide algorithms for maximal matching, $(degree+1)$-coloring, 2-approximation for the minimum \emph{weighted} vertex cover (2-MWVC), and maximal independent set (MIS). 
We further show that for some of these tasks, fast fixing is also possible with node insertions and deletions. Here, we consider the severe case of \emph{abrupt} deletions, where  a deleted node does not have a chance to inform its neighbors about its upcoming departure from the system.

Our algorithms enjoy the combination of (1) having an $O(1)$ amortized time complexity, (2) using only $O(\log{n})$-bit messages, (3) not posing any restrictions on the dynamic behavior of the environment and in particular not requiring topology changes to be spaced in time, (4) being deterministic, (5) having strong guarantees for intermediate solutions, and (6) being applicable for a wide family of tasks. 
In recent years, there has been much progress on distributed dynamic algorithms, achieving different combinations of the above promises. Our algorithms significantly improve upon all prior work by that they guarantee the combination of all the above properties. We elaborate upon -- and compare to -- prior work in Section~\ref{subsec:related}.

We stress that as opposed to centralized dynamic data structures, not posing any restrictions on the dynamic behavior of the environment is vital in the distributed setting, as
the input graph is the communication graph itself. More concretely, in centralized dynamic data structures when multiple topology changes occur, we can simply handle them one by one. However, in our setting, nodes cannot communicate over a deleted edge, and so we cannot sequentially apply an independent update algorithm for each topology change ---  an edge deletion affects the communication already when it \emph{happens}, not only when it is \emph{handled}.

\subsection{Motivation}
\label{subsection:motivation}
\vspace{-0.1cm}
Each of the aforementioned problems is a locally-checkable labeling (LCL) problem. The notion of an LCL is a celebrated concept in distributed computing, first defined by Naor and Stockmeyer~\cite{NaorS95} in order to capture tasks in which nodes can efficiently detect inconsistencies, motivated by the unstable nature of distributed systems.
Since the publication of this pioneering work, the complexity of solving tasks that can be described as LCLs has been extensively studied in the distributed setting. We ask the following question, paraphrased in correspondence with the title of~\cite{NaorS95}:
%
\begin{center}
\emph{Question: What can be fixed locally?}
\end{center}
%
We begin by recalling the definition of LCLs of~\cite{NaorS95}, restricting our attention to LCLs with \emph{radius} $r=1$.
A \emph{centered star} is a pair $(H,s)$ where $H$ is a star graph and $s$ is its center.
An LCL $\mathcal{L}$ is a tuple $(\Sigma, \Gamma, \mathcal{C})$, where $\Sigma$ is a set of \emph{input labels}, $\Gamma$ is a set of \emph{output labels}, and $\mathcal{C}$ is a set of \emph{locally consistent labelings}. Each element of $\mathcal{C}$ is a centered star, with a label in~$\Sigma \times \Gamma$ for each of its nodes.\footnote{In the work of Naor and Stockmeyer~\cite{NaorS95} the set of labels $\Sigma$ has a fixed size, while here we omit this limitation in order to give more power to the labelings. However, algorithmically, we always keep the size of messages small even when labels are large, by sending only pieces of them.}

A labeling $\lambda: V \rightarrow \Sigma \times \Gamma$ is called $\mathcal{L}$-legal for a graph $G=(V,E)$, if for every $v\in V$, there exists a centered star $(H,s)$ in $\mathcal{C}$ with a label-pair at each node,
which is consistent with $\lambda$ in the following sense: there exists a mapping $\pi$ that maps the star centered at $v$ in $G$ into $(H,s)$, with $\pi(v)=s$, such that for every node $w$ in the star centered at $v$,
the label-pair given by $\lambda$ is the same as the label-pair of the node $\pi(w)$ in $(H,s)$.

As explained in~\cite{NaorS95}, the set $\mathcal{C}$ defines allowed labels for neighborhoods, as opposed to defining a set of forbidden ones. If the LCL has no inputs, then one can simply choose a default input label, i.e., $|\Sigma| = 1$. An algorithm that solves the problem defined by an LCL $\mathcal{L}$ is an algorithm whose output on a graph $G$ is an $\mathcal{L}$-legal labeling.

~\\\textbf{Not all LCLs are easily fixable:} The  following variant of the \emph{sinkless orientation} problem~\cite{BrandtFHKLRSU16} is an example of an LCL problem that is not easily fixable.
Each node has a label that corresponds to an orientation of its edges, such that labels at endpoints of an edge are consistent, and such that there is no node of degree greater than 1 that is a sink, i.e., has no outgoing edge.
It is easy to verify that every graph has a valid labeling%
\footnote{If $G$ is a tree, choose an arbitrary root and orient the edges away from the root.
Otherwise, choose a cycle in $G$ and orient its edges cyclically, then imagine contracting its nodes into a single super-node and orient edges towards this super-node along some spanning tree, and
orient other edges arbitrarily.}%
, and that this is an LCL.
To see that this LCL cannot be fixed within an amortized complexity of $O(1)$, consider a graph on $n$ nodes that evolves dynamically, creating two paths of roughly $n/2$ nodes each. Each path must be oriented consistently with a single sink in one of its endpoints. Inserting an edge between the sinks of the two paths forces the orientation of all of the edges in one of the sub-paths to flip, which takes $\Omega(n)$ rounds. Deleting this edge induces again two paths with a single sink each, and repeating the process of inserting an edge between the new sinks and deleting it causes a linear number of rounds that can be attributed to only two topology changes, which implies an amortized time of $\Omega(n)$. This holds even if topology changes do not happen concurrently, and even if the messages can be of arbitrarily large size.


\subsection{The challenges}
For any LCL problem we address, we assume that the system begins with a globally correct labeling, and thus what an algorithm needs to do as a consequence of topology changes is to have the affected nodes update their labels. Naturally, for some problems, the update procedure may also require that a node updates the labels of its neighbors (more precisely, this is accomplished via sending messages to its neighbors requiring them to update their labels).
For example, in a solution for maximal matching this might occur when an edge that is in the matching is deleted, and its endpoints need to be match themselves to other neighbors. 
At a first glance, this may sound as a simple and straightforward approach for fixing matchings and problems of local flavor. However, this approach turns out to be far from trivial, and below we describe multiple key challenges that we must overcome in order to implement it successfully.

~\\\textbf{(1) Defining \emph{fixing} and \emph{amortized complexity}:} We need to define what \emph{fixing the solution} means. We aim for our algorithm to work in a very harsh setting, in which it might be the case that there are so many topology changes that we never actually obtain a globally correct labeling, but still we maintain strong guarantees for intermediate labelings. 
Notice that this is in stark contrast to centralized dynamic data structures, which can always consider globally correct solutions since topology changes may be handled one-at-a-time because they only affect the input and not the computation itself. This is also the case for the majority of previous distributed algorithms: they are designed under the assumption that topology changes are spaced well enough in time so that it is possible to obtain a globally correct solution before the next topology change happens. 

~\\\textbf{(2) Coping with concurrent fixing with a timestamp mechanism:} Because we might need a node to change the labels of its neighbors and not only its own label in order to fix the solution, we make sure that concurrent fixing always happens for nodes that are not too close, and other nodes wait even if their labeled stars are not yet correct (e.g., to avoid two nodes $u,v,$ trying to get matched to the same node $w$ concurrently). To this end, our method is to assign a timestamp to each node involved in a change, and \emph{fix} a node only if its timestamp is a local minimum in some short-radius neighborhood, thus avoiding conflicting concurrent fixes. We call such a node \emph{active}.

~\\\textbf{(3) Detecting and aborting conflicting timestamps:} Such a timestamp mechanism alone is still insufficient: the uncontrolled number of topology changes may, for example, suddenly connect two nodes that were previously far enough so that they could become active simultaneously, but after concluding that they can both become active, an edge insertion now makes them part of the same short-radius neighborhood. We carefully take care of such cases where our timestamps have been cheated by the topology changes, by detecting such occurrences and \emph{aborting the fixing}, without harming the amortized complexity guarantees.

~\\\textbf{(4) Bounding the size of timestamps to cope with message size restrictions:} Finally, the restriction on the size of messages forbids unbounded timestamps, despite an unbounded number of rounds (e.g., times). To resolve this issue, we utilize ideas from the literature on shared memory algorithms, e.g.,~\cite{AttiyaDS89}, for deterministically hashing the timestamps into a small bounded domain so that the nodes can afford sending a hashed timestamp in a single small message, and we do so in a way that preserves the total order over timestamps. 

\subsection{Our contributions}
Our main contribution is thus deterministic dynamic distributed fixing algorithms for several fundamental problems. Our algorithms share a common approach, and only minor modifications that are specific to each labeling are required. 
In some cases we can also handle a node insertion/deletion, which is a-priori possibly harder to deal with, because it may affect more nodes while in the amortized analysis we count it as a single topology change.

The following theorem summarizes the end-results, which hold in a model with an unbounded number of topology changes that may occur concurrently, and when only a logarithmic number of bits can be sent in a message.
\begin{theorem}
\label{theorem:all}
	There is a deterministic dynamic distributed fixing algorithm for \textbf{$(degree+1)$-coloring} and for a \textbf{$2$-approximation of a minimum weight vertex cover}, which handles edge insertions/deletions and node insertions in $O(1)$~amortized rounds.

	There are deterministic dynamic distributed fixing algorithms for \textbf{maximal matching}, \textbf{$(\Delta+1)$-coloring} (where $\Delta$ is the maximum node degree) and \textbf{MIS}, which handle edge/node insertions/deletions in $O(1)$~amortized rounds.
\end{theorem}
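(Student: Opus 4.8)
The plan is to prove the theorem by building a single generic \emph{fixing framework} and then instantiating it for each of the five LCLs, verifying a short list of problem-specific conditions. The framework maintains, as an invariant between topology changes, that the current labeling is $\mathcal{L}$-legal except at a set of \emph{incorrect} nodes --- those whose centered star currently violates $\mathcal{C}$ because of a recent change. First I would give each node the ability to detect, in one communication round on its radius-$1$ star, whether it is incorrect; this is exactly the local-checkability of an LCL with $r=1$. A topology change flags the at most two endpoints (or, for a node event, the incident neighbors) as incorrect and stamps them with a timestamp. The algorithm then proceeds in two alternating phases, $\prep$ and $\fix$: in $\prep$ a node gathers the timestamps in a constant-radius ball, and in $\fix$ a node is declared \emph{active} --- and actually edits labels --- only if its timestamp is a strict local minimum in that ball. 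Restricting action to local-minimum timestamps is what guarantees that two simultaneously-acting nodes are far enough apart that their edits cannot collide.

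Next I would instantiate the local edit of an active node for each task and argue correctness. For \textbf{$(degree+1)$-coloring} and \textbf{$(\Delta+1)$-coloring} an active node simply recolors itself with the smallest color unused by its neighbors, which always exists and leaves every neighbor legal, so the cascade is trivial. For \textbf{maximal matching} an active free node whose match disappeared tries to rematch, and the endpoints of a deleted matched edge re-enter the pool; the only interaction is the classical ``two nodes grab the same free neighbor,'' which the local-minimum rule prevents. For \textbf{$2$-MWVC} I would maintain a primal--dual cover and fix the edge-charges locally. The subtle case is \textbf{MIS}: removing a node from the set (because it was abruptly deleted, or because a newly inserted neighbor was already in the set) can leave several neighbors with no set-neighbor, each wishing to join, and their joining may force further nodes out, so the cascade propagates. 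Here the timestamps must additionally serialize the ``who joins'' decision along each propagation path, and I would bound the propagation by a strictly decreasing local-discrepancy argument. Node insertion is handled by introducing the node with a placeholder label and treating each incident edge as an insertion; an \emph{abrupt} node deletion is modeled as the simultaneous deletion of all its incident edges, which is precisely why MIS, maximal matching, and $(\Delta+1)$-coloring require the full machinery.

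I would then dispatch the two correctness hazards that the skeleton alone does not handle. For \emph{conflicting timestamps}, the danger is that new edge insertions shrink distances so that two nodes that each saw themselves as a local minimum become close mid-fix; I would have each active node re-verify, just before committing, that it is still a local minimum, and otherwise \emph{abort} and roll its tentative edit back. For the \emph{message-size} constraint I cannot send raw timestamps, which grow unboundedly with the round count; instead I would apply an order-preserving deterministic hash in the spirit of~\cite{AttiyaDS89}, so that a single $O(\log n)$-bit codeword suffices to compare any two timestamps that can ever be compared within a constant-radius ball, while the total order over timestamps is preserved.

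The crux --- and the step I expect to be the main obstacle --- is the amortized analysis giving $O(1)$ rounds per topology change. I would design a potential $\Phi$ combining the number of $\incorrect$ nodes, the accumulated $\changes$, and an $\epochs$ counter, and argue that each topology change raises $\Phi$ by $O(1)$ (it flags $O(1)$ nodes, or for a node event $O(\deg)$ nodes that we charge to the single event), while every round in which no new change arrives strictly decreases $\Phi$ through $\prep$/$\fix$ progress. The delicate points are (i) charging the work of every \emph{aborted} fix to the specific insertion that created the conflict, so that aborts do not inflate the amortized cost, and (ii) bounding the MIS cascade, where one deletion can trigger a long chain of joins; I would show that each successful $\fix$ removes at least one $\incorrect$ node from $\Phi$ and creates only boundedly many new ones of strictly smaller combined potential, so the chain terminates with total work linear in the number of triggering changes. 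Summing, the number of active fixing rounds is $O(1)$ times the number of topology changes, which yields the claimed $O(1)$ amortized complexity for all five problems simultaneously.
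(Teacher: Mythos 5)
Your skeleton (timestamps, local-minimum activation, a prepare/fix split, aborting on conflicts, order-preserving hashing of timestamps) is the paper's framework, but there are two places where your amortized analysis does not close, and they are exactly the places the paper spends its effort.

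First, MIS. You claim each successful $\fix$ "creates only boundedly many new incorrect nodes of strictly smaller combined potential," so the cascade terminates with linear total work. That premise is false: when an active node $v$ leaves the MIS (because an edge to another MIS node was inserted), \emph{every} neighbor of $v$ that is $\false$ and has no other $\true$ neighbor must be reconsidered --- that is $\Theta(\deg(v))$ newly dirty nodes from one fix, each of which will consume its own epoch. No per-step fan-out bound or "strictly decreasing local discrepancy" rescues this. The paper's argument is an injective \emph{blaming} of each such newly-dirty neighbor $z_j$ to a distinct \emph{earlier} topology change: $z_j$ is only in this situation because some previous edge insertion between two $\true$ nodes (or insertion of $z_j$ next to a $\true$ node) turned $z_j$'s label to $\false$, and by design that earlier change made only \emph{one} node dirty, leaving an unspent credit. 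The map is injective because $z_j$ can be re-blamed only after its label flips to $\true$ and back to $\false$, which requires a fresh such change. Starting from the empty graph is needed precisely so that every $\false$ label traces back to such an insertion. Without this charging scheme your potential increases by $\Theta(\deg)$ per fix and the $O(1)$ bound fails.

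Second, node events. You model a node deletion as simultaneous deletion of all incident edges, flag $O(\deg)$ neighbors, and "charge them to the single event" --- but then the amortized cost \emph{is} $O(\deg)$, not $O(1)$; naming the single event as the payer does not make the bill constant. The paper instead uses problem-specific rules that keep the dirty set at $O(1)$ per node event: for matching only the neighbor matched to the deleted node becomes dirty; for $(\Delta+1)$-coloring nobody does (losing edges never invalidates a color in that palette); for MIS only the uncovered $\false$ neighbors do, paid for by the blaming argument above; and for node insertion only the inserted node becomes dirty. Correspondingly, the theorem does \emph{not} claim node deletions for $(degree+1)$-coloring or 2-MWVC (only insertions), because there no such rule is available --- a deletion shrinks every neighbor's palette, respectively forces revoking dual weight on every incident edge. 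Your uniform treatment of all five problems under node deletion proves a statement the theorem deliberately does not make. A smaller remark: your "re-verify local minimality before committing" needs its own round of timestamp propagation, which can itself be raced by new insertions; the paper's abort rule (abort if \emph{any} incident change occurred during the epoch) avoids this regress and still lets the aborted node's new timestamp pay for the wasted epoch.
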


Sections~\ref{section:mm} and~\ref{section:coloring} show our algorithm for maximal matching and $(degree+1)$-coloring, respectively. 
This is developed and modified in Section~\ref{sec:apx}, to present our 2-MWVC algorithm. We mention that the labeling for the solution of 2-MWVC that we maintain is not the na\"{i}ve one that only indicates which nodes are in the cover, but rather contains information about dual variables that correspond to edge weights, and allow the fast fixing.

Section~\ref{subsec:mis} gives our algorithm for MIS. In the MIS case, the restriction of message size imposes an additional, huge difficulty. The reason is that if an MIS node $v$ needs to leave the MIS because an edge is inserted between $v$ and some other MIS node $u$, then all other neighbors of $v$ who were previously not in the MIS are now possibly not covered by an MIS neighbor. Yet, they cannot all be moved into the MIS, as they may have an arbitrary topology among them. With unbounded messages this can be handled using very large neighborhood information but such an approach is ruled out by the the restriction of $O(\log n)$-bit messages.

Nevertheless, we prove that with some modifications to our algorithmic approach, we can also handle MIS without the need to inform nodes about entire neighborhoods. The road we take here is that instead of fixing its neighborhood, a node tells its neighbors that they should become active themselves in order to fix their labeled stars. On the surface, this would entail an unacceptable overhead for the amortized complexity that is proportional to the degree of the node. The crux in our algorithm and analysis is
in blaming previous topology changes for such a situation --- for every node $u$ in the neighborhood of $v$ which is only dominated by $v$, there is a previous topology change (namely, an insertion of an edge $\{u,w\}$, where $w$ may or may not be $v$) for which we did not need to fix the label of $w$. This potential function argument allows us to amortize the round complexity all the way down to $O(1)$, and the same technique is utilized to handle node insertions and deletions.

In Appendices~\ref{section:LFL general case}--\ref{app:alg}, we present a generalization of our algorithm, by defining a family of graph labelings, in the flavor of the LCL definition, which can all be fixed in constant amortized time.
This generalization is inherently intricate, and is added in order to assist a reader who may wonder about such a generalization, and to motivate our choice of presenting one algorithm for maximal matching and several modifications of it for the other problems, rather than a single unified algorithm.

\subsection{Related work}	
\label{subsec:related}
The end results of our work provide fast fixing for fundamental graph problems, whose static algorithmic complexity has been extensively studied in the distributed setting. A full overview of the known results merits an entire survey paper on its own (see, e.g.,~\cite{Suomela13,BarenboimEBOOK}). An additional line of beautiful work studies the landscape of distributed complexities of LCL problems, and the fundamental question of using randomness (see, e.g.,~\cite{BrandtHKLOPRSU17,GhaffariKM17,BalliuHKLOS18,BalliuBOS19,ChangP19,ChangKP19}).

For dynamic distributed computing, there is a rich history of research on the important paradigm of \emph{self-stabilization} (see, e.g., the book~\cite{Dolev2000}) and in particular on symmetry breaking (see, e.g., the survey~\cite{GuellatiK10}). Related notions of error confinement and fault-local mending have
been studied in \cite{AzarKP10,KuttenP99,KuttenP00}. Our model greatly differs from the above. There are many additional models of dynamic distributed computation (e.g.,~\cite{BonneC19, Kuhn2010}), which are very different from the one we consider in this paper.

Some of the oldest works in similar models to ours are~\cite{Italiano91, Elkin07}, who provide algorithms for distance-related tasks.
Constant-time algorithms were given 
in~\cite{KonigW13} for symmetry-breaking problems assuming unlimited bandwidth and a single topology change at a time.
The work of~\cite{Censor-HillelHK16},  provides a randomized algorithm that uses small messages to fix an MIS in $O(1)$-amortized update time for a non-adaptive oblivious adversary, still assuming a single change at a time.
The latter left as an open question the complexity of fixing an MIS in the sequential dynamic setting. This was picked up~in~\cite{AssadiOSS18, AssadiOSS19,DuZ2018,GuptaK18}, 
giving the first non-trivial sequential MIS algorithms, 
which were recently revised and improved~\cite{Behnezhad2019,Chechik2019}.
Specifically, the algorithm of~\cite{AssadiOSS18} achieves an $O(\min\{\Delta, m^{3/4}\})$ amortized message complexity and $O(1)$-amortized round complexity and adjustment complexity (the number of vertices that change their output after each update) for an adaptive non-oblivious adversary in the distributed setting.
However, they handle only a single change at a time, and sometimes need to know the number of edges, which is global knowledge that our work avoids assuming.
In fact, if one is happy with restricting the algorithm to work only in a model with a single topology change at a time, then sending timestamps is not required, so $O(1)$-bit messages suffice in our algorithm for MIS, resembling what~\cite{AssadiOSS18} obtains.
\cite{ParterPS16} provides a neat log-starization technique, which translates logarithmic static distributed algorithms into a dynamic setting such that their amortized time complexity becomes $O(\log^*{n})$. This assumes a single change at a time and large messages.
\cite{Solomon16} shows that maximal matching have $O(1)$ amortized complexity, even when counting messages and not only rounds, but assuming a single change at a time.

The ($\Delta+1$)-coloring algorithm of~\cite{BarenboimEG18} also implies fixing in a self-stabilizing manner --- after the topology stops changing, only $O(\Delta+\log^* n)$ rounds are required in order to obtain a valid coloring, where $\Delta$ bounds the degrees of all the nodes at all times.

Perhaps the setting most relevant to ours is the one studied 
in~\cite{BambergerKM18}, who also address a very similar highly-dynamic setting. They insightfully provide fast dynamic algorithms for a wide family of tasks, which can be decomposed into packing and covering problems, in the sense that a packing condition remains true when deleting edges and a covering condition remains true when inserting edges. For example, MIS is such a problem, with independence and domination being the packing and covering conditions, respectively.
An innovative contribution of their algorithms is providing guarantees also for intermediate states of the algorithm, that is, guarantees that hold even while the system is in the fixing process. They show that the packing property holds for the set of edges that are present throughout the last $T$ rounds, and that the covering property holds for the set of edges that are present in either of the last $T$ rounds, for $T=O(\log n)$. Moreover, their algorithms have correct solutions if a constant neighborhood of a node does not change for a logarithmic number of rounds. Our algorithm guarantees correctness of labeled stars for nodes for which any topology change touching their neighborhood has already been handled.
In comparison with their worst-case guarantee of $O(\log n)$ rounds for a correct solution, our algorithm only gives $O(n)$ rounds in the worst case. However, our amortized complexity is $O(1)$, our messages are of logarithmic size, and our algorithm is deterministic, while the above is randomized with messages that can be of polylogarithmic size.

In addition, a recent work~\cite{CKS} studies subgraph problems in the same model described in our paper.

A different definition of local fixability is given in~\cite[Appendix A]{BhattacharyaCHN18}, suitable for \emph{sequential} dynamic data structures, which requires a node to be able to fix the solution by changing only its own state. While this captures tasks such as coloring, and is helpful in the sequential setting for avoiding the need to update the state of all neighbors of a node, in the distributed setting we can settle for a less restrictive definition, as a single communication round suffices for updating states of neighbors, if needed. 
Indeed, our algorithmic framework captures a larger set of tasks: notably, we provide an algorithm for MIS, while~\cite{BhattacharyaCHN18} prove that it does not fall into their definition.
In addition, \cite[Section 7]{BhattacharyaCHN18} raise the question of fixing (in the sequential setting) problems that are in P-SLOCAL\footnote{Roughly speaking, SLOCAL($t$) is the class of problems that admit solutions by an algorithm that iterates over all the nodes of the graph, and assigns a solution to each node based on the structure of its $t$-neighborhood and solutions already assigned to nodes in this neighborhood. P-SLOCAL is the class SLOCAL(polylog $n$).}~\cite{GhaffariKM17}. Notably, this class contains approximation tasks, and indeed for some approximation ratios we can apply our framework. Indeed, our algorithm has the flavor of sequentially iterating over nodes and fixing the labels in their neighborhood, with the additional power of the distributed setting that allows it to work concurrently on nodes that are not too close. This also resembles the definition of orderless local algorithms given in~\cite{KawarabayashiS18}, although a formal definition for the case of fixing does not seem to be simpler than our framework.

\section{Model}
\label{section:model}
We assume a synchronous network that starts as an empty graph on $n$ nodes and evolves into the graph $G_i=(V_i,E_i)$ at the beginning of round $i$; in most of our algorithms, one can alternatively assume any graph as the initial graph, as long as the nodes start with a labeling that is globally consistent for the problem in hand. 
In some cases, we also allow node insertion or deletion, and then $n$ serves as a universal upper bound on the number of nodes in the system.
Each node is aware of its unique id, the edges it is a part of, its weight if there is one, and of $n$. In addition, the nodes have a common notion of time, so the execution is synchronous. 
New nodes do not know the global round number. (We mention that in our algorithms it is sufficient for each node to know the round number modulo $15n$, and a new node can easily obtain this value from its neighbors, so we implicitly assume all nodes have this knowledge.)

In each round, each node receives \emph{indications} about the topology changes that occurred to its incident edges. We stress that the indications are a posteriori, i.e., the nodes get them only after the changes occur, and thus cannot prepare to them in advance (these are called \emph{abrupt} changes). 
After receiving the indications and performing local computation, each node can send messages of $O(\log n)$ bits to each of its neighbors.

We work in a distributed setting where each node stores its own label. A distributed fixing algorithm should update the labels of the nodes in a way that corrects the labeled stars that become incorrect due to topology changes. Naturally, for a highly-dynamic setting, we do not require a global consistent labeling in scenarios in which the system is undergoing many topology changes.

We consider four classical graph problems.
In the \emph{maximal matching} problem, the nodes have to mark a set of edges
such that no two intersect, and such that no edge can be added to the set without violating this condition.
In \emph{minimum weight vertex cover} (MWVC), the nodes start with weights, and the goal is to choose a set of nodes that intersect all the edges, and have the minimum weight among all such sets; we will be interested in the $2$-approximation variant of the problem, where the nodes choose a set of weight at most twice the minimum.
Finally, in the \emph{maximal independent set} (MIS) problem, the nodes must mark a set of nodes such that no two adjacent nodes are chosen, and such that no node can be added to the set.

~\\\textbf{The complexity of distributed fixing algorithms:} 
When the labels of a star become inconsistent due to changes, a distributed fixing algorithm will perform a fixing process, which ends when the labels are consistent again, or when other changes occur in this star. 
The \emph{worst-case round complexity} of a distributed fixing algorithm is the maximum number of rounds such a fixing process may take.

In our algorithms, it could be that it takes a while to fix some star, but we can argue that this is because other stars are being fixed. We measure this progress with a definition of the amortized round complexity.

When studying \emph{centralized} algorithms for dynamic graphs, the amortized complexity measure is typically defined by an \emph{accounting method}, i.e., considering the time when the fixing process ends, and dividing the number of computation steps taken so far by the number of changes that occurred. The natural generalization of this definition to the distributed setting could be to take a  time when the graph labeling is globally correct, and divide the number of rounds occurred so far by the number of changes the network had undergone.
The first and most eminent problem in such a definition is that it requires a time when the \emph{global solution} is correct, which is something that we cannot demand in a highly-dynamic environment. 
The second problem with it is that the adversary can fool this complexity measure, by doing nothing for some arbitrary number of rounds in which the graph is correct, while the algorithm still gets charged for these rounds.

To overcome the above problems, we define the amortized round complexity as follows. 
Starting from round 0, in which the labeling is consistent for all stars, we consider the situation in each round $i$.
We denote by $\incorrect(i)$ the number of rounds until round $i$ in which there exists at least one inconsistent star. These are the computation rounds for which we charge the algorithm. Notice that we do not count only communication rounds in order to prevent an algorithm that cheats by doing nothing.\footnote{One could count also rounds in which the labeling is globally correct if the algorithm chooses to communicate in these rounds. Our algorithm never communicates in such rounds, so such a definition would not change our amortized complexity.}
We denote by $\changes(i)$ the number of changes which occurred until round $i$.
We say that an algorithm has an \emph{amortized round complexity $k$} if for every $i$ with $\changes(i)>0$, we have $\incorrect(i)/\changes(i)\leq k$.
This definition captures the \emph{rate} at which changes are handled, in a way that generalizes the sequential definition.

\subparagraph{Guarantees of our algorithm:}
Our algorithms have an $O(1)$ amortized fixing time, and in addition, they have additional desired progress properties.
First, our algorithms guarantee a worst-case complexity of $O(n)$, which implies that repeated changes far from a given star will not postpone it from being fixed for too long.
Moreover, if a labeled star is consistent and no topology change touches its neighborhood, then it remains consistent. Thus, our algorithm has strong guarantees also for intermediate solutions.

\section{An $O(1)$ amortized dynamic algorithm for maximal matching}
\label{section:mm}

The solution to the maximal matching problem at any given time is determined according to the labels of the nodes. A label of a node $v$ can be either $\unmatched$ or $\matched u$, indicating that $v$ is unmatched, or is matched to $u$, respectively.
Each node starts with the label $\unmatched$. Alternatively, one can assume any graph as the initial graph, as long as the nodes start with a legal maximal matching solution.
We prove the following.

\begin{theorem}
\label{theorem:mm}
There is a deterministic dynamic distributed fixing algorithm for maximal matching which handles edge insertions/deletions in $O(1)$ amortized rounds.
\end{theorem}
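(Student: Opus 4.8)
The plan is to design a distributed fixing algorithm for maximal matching that, upon each topology change, lets affected nodes attempt to repair their labels, while using the timestamp/activation machinery described in the challenges section to serialize conflicting repairs among nearby nodes. First I would specify which stars become inconsistent as a consequence of each type of change. An edge insertion $\{u,v\}$ only creates a problem if both $u$ and $v$ are currently $\unmatched$, in which case the new edge could be added to the matching. An edge deletion $\{u,v\}$ creates a problem only if $u$ and $v$ were matched to each other, i.e.\ carried labels $\matched v$ and $\matched u$; then both become effectively free and must look for a new partner among their remaining neighbors. In all other cases the existing labeling remains a valid maximal matching, so no fixing is triggered.

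Next I would describe the fixing step for an active node. When a node $v$ becomes $\unmatched$ (either freshly inserted into the inconsistent set, or freed by a deletion), it inspects its incident edges, and if it has an $\unmatched$ neighbor it proposes a match. To prevent two free nodes $u$ and $v$ from both matching to the same free neighbor $w$, I would only allow a node to act when its timestamp is a local minimum within a short-radius neighborhood, making it \emph{active} in the sense of challenge~(2); a matching proposal and its acknowledgement then complete in $O(1)$ communication rounds. The key invariant to maintain is that whenever no star is inconsistent and no topology change touches a star's neighborhood, the labels constitute a legal maximal matching --- i.e.\ the matched edges are disjoint and no edge has both endpoints $\unmatched$. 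I would verify this invariant is preserved by each individual fixing operation, and that a single fixing operation can create at most a bounded number of new inconsistencies (freeing the old partner of a re-matched node).

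The heart of the argument is the amortized analysis bounding $\incorrect(i)/\changes(i) = O(1)$. I would set up a potential-function / charging argument in the spirit of challenges~(2)--(3): each topology change injects a bounded amount of potential (a constant number of nodes entering the inconsistent set, each with a fresh timestamp), and each round in which some star is inconsistent must witness at least one active node successfully completing a fix, which discharges potential. The subtlety flagged in challenge~(3) is that concurrent topology changes can retroactively violate the locality assumption under which two nodes both became active --- e.g.\ an edge insertion suddenly places two would-be-active nodes in the same neighborhood. I would handle this by detecting such a conflict via the timestamps and \emph{aborting} the fix of the node whose timestamp is not the true local minimum, charging the wasted work to the topology change that caused the conflict rather than to the victim star. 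Finally, to respect the $O(\log n)$ message bound (challenge~(4)), the unbounded timestamps must be hashed into a small domain while preserving their total order, using the deterministic shared-memory technique of~\cite{AttiyaDS89}.

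I expect the main obstacle to be the amortized analysis rather than the correctness of the fixing rule itself, which for matching is comparatively clean. Specifically, the difficulty is arguing that the abort mechanism does not cascade: an aborted fix leaves its star still inconsistent, and I must ensure that the rounds spent on aborted attempts are charged to distinct topology changes and never double-counted, so that the total charge stays linear in $\changes(i)$. Establishing that every round with an inconsistent star makes genuine progress somewhere in the graph --- so that no constant fraction of rounds is wasted globally on mutually-aborting nodes --- is where I would concentrate the effort, and where the careful timestamp-comparison and local-minimum conditions must be shown to guarantee that at least one active node per round commits rather than aborts.
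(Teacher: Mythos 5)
Your overall architecture matches the paper's: per-change timestamps, activation by local-minimum timestamp in a short-radius neighborhood, aborting when a concurrent insertion invalidates the locality assumption, and order-preserving hashing of timestamps into $O(\log n)$ bits. The fixing rule you describe for matching (become $\unmatched$ on loss of a partner, then propose to the lowest-indexed $\unmatched$ neighbor) is also essentially the paper's.

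However, you explicitly leave open the crux of the amortized bound, and you frame it in a way that would not go through as stated. You say the effort must go into showing that ``at least one active node per round commits rather than aborts.'' That statement is not what the paper proves and is not needed; in fact one cannot rule out epochs in which \emph{every} active node aborts (an adversary can touch every active node with a fresh change). The actual argument is an accounting over \emph{timestamps}, not over successful fixes: the owner of the globally minimal timestamp necessarily becomes active (it cannot receive a smaller one), and its timestamp is retired in that epoch \emph{whether it commits or aborts} --- because an abort is triggered only by a fresh topology change incident to that node, which issues a fresh, larger timestamp that replaces the old one (each node stores only its latest timestamp). Hence every epoch retires at least one timestamp, each change creates at most two, and $\epochs(i)\leq 2\cdot\changes(i)$; no cascade analysis of aborts is needed. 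Without this observation your charging scheme is incomplete, and your worry about double-counting aborted rounds is a symptom of charging the wrong objects.

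Two further points you gloss over. First, the hashing step requires proving that any two timestamps simultaneously present differ by at most $O(n)$ rounds (the paper shows this via: at most $n$ timestamps coexist, one per node, and at least one is retired per $O(1)$-round epoch), otherwise the modular hash does not preserve the order. Second, correctness requires all concurrent fixes in a phase to be computed against a \emph{common snapshot} of the graph: the paper has every dirty node first locally update and broadcast its label relative to $G_{\gamma j}$ at the start of the epoch, before any fixing, precisely so that an active node does not match itself based on a neighbor's stale $\unmatched$ label. Your ``propose and acknowledge in $O(1)$ rounds'' elides this synchronization, which is where the triangle-style counterexamples live.
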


\begin{proof}
First, we assume that all nodes start with an initial globally consistent solution.

~\\\textbf{The setup:}
We denote $\gamma=5$.

Let $F_i$ be a set of edge changes (insertions/deletions) that occur in round $i \geq 0$ (for convenience, the first round is round 0). With each change in $F_i$, we associate two \emph{timestamps} such that a total order is induced over the timestamps as follows: for an edge $e=\{u,v\}$ in $F_i$, we associate the timestamp $ts=(i,u,v)$ with node $u$, and the timestamp $(i,v,u)$ with node $v$. Since $u$ and $v$ start round $i$ with an indication of $e$ being in $F_i$, both can deduce their timestamps at the beginning of round $i$. We say that a node $v$ is the \emph{owner} of the timestamps that are associated with it. In each round, a node only stores the largest timestamp that it owns, and omits the rest.

Notice that timestamps are of unbounded size, which renders them impossible to fit in a single message. To overcome this issue we borrow a technique of ~\cite{AttiyaDS89}, and we invoke a deterministic hash function $H$ over the timestamps, which reduces their size to $O(\log{n})$ bits, while retaining the total order over timestamps. The reason we can do this is that not every two timestamps can exist in the system concurrently. To this end, we define $h(i)=i \mod 3\gamma n$ and
$H(ts)=(h(i), u,v)$ for a timestamp $ts=(i,u,v)$, and we define an order $\prec_H$ over hashed timestamps as the lexicographic order of the 3-tuple, induced by the following order $\prec_h$ over values of $h$. We say that $h(i)\prec_h h(i')$ if and only if one of the following holds:
\begin{itemize}
\item $0 \leq h(i) < h(i') \leq 2\gamma n$, or
\item $\gamma n \leq h(i) < h(i') \leq 3\gamma n$, or
\item $2\gamma n \leq h(i) < 3\gamma n$ and $0 \leq h(i') < \gamma n$.
\end{itemize}
If two timestamps $ts=(i,v,u)$, $ts'=(i',v',u')$ are stored in two nodes $v,v'$ at two times $i,i'$, respectively, it holds that $ts<ts'$ (by the standard lexicographic order) if and only if  $H(ts) \prec_H H(ts')$. The reason that this holds despite the wrap-around of hashed timestamps in the third bullet above, is the following property that we will later prove: for every two such timestamps, it holds that $i'-i \leq \gamma n$. This implies $h(i) \prec_h h(i')$ whenever $i < i'$ despite the bounded range of the function $h$.

~\\\textbf{The algorithm:}
In the algorithm, time is chopped up into \emph{epochs}, each consisting of $\gamma$ consecutive rounds, in a non-overlapping manner. That is, epoch $j$ consists of rounds $i=\gamma j,\dots,\gamma (j+1)-1$.
For every epoch $j \geq 0$, we consider a set $D_j \subseteq V$ of \emph{dirty} nodes at the beginning of each epoch, where initially no node is dirty ($D_0=\emptyset$).
Some nodes in $D_j$ may become \emph{clean} by the end of the epoch, so at the end of the epoch the set of dirty nodes is denoted by $D'_j$, and it holds that $D'_j \subseteq D_j$. At the beginning of epoch $j+1$, all nodes that receive any indication of an edge in $F_i$ in the previous epoch are added to the set of dirty nodes, i.e., $D_{j+1} = D'_{j} \cup I_j$, where $I_j$ is the set of nodes that start round $i$ with any indication about $F_i$, for any $\gamma j \leq i \leq \gamma (j+1)-1$.

Intuitively, the algorithm changes the labels so that \emph{the labels at the \textbf{end} of the epoch are consistent with respect to the topology that was at the \textbf{beginning} of the epoch}, unless they are labels of dirty nodes or of neighbors of dirty nodes.

The algorithm works as follows. In epoch $j=0$, the nodes do not send any messages, but some of them enter $I_0$ (if they receive indications of edges in $F_i$, for $0 \leq i \leq \gamma-1$).

Denote by $N_{v}^i$ the neighborhood of $v$ in round $i$, denote by $L_{v}^i$ the label of $v$ at the beginning of round $i$, before the communication takes place, and denote by $\hat{L}^i_v$ the label at the end of the round. Unless stated otherwise, the node $v$ sets $\hat{L}^i_v \leftarrow L^i_v$ and $L^{i+1}_v \leftarrow \hat{L}^i_v$.
Now, consider an epoch $j>0$. On round $\gamma j$ every node $v\in D_j$ may locally change its label to indicate that it is unmatched, in case the edge between $v$ and its previously matched neighbor is deleted:
\begin{equation}
\label{eqn:LprepMM}
L_v^{\gamma j}=
\begin{cases}
\matched u, & \text{if } \hat{L}_v^{\gamma j-1}=\matched u \text{ and } u\in N^{\gamma j}_v\\
\unmatched, & \text{otherwise}
\end{cases}
\end{equation}
where $\hat{L}_v^{\gamma j-1}$ is the label that $v$ has at the \emph{end} of round $\gamma j-1=\gamma (j-1)+4$, which, as we describe below, may be different from its label $L_v^{\gamma j-1}$ at the beginning of the round.\footnote{We stress that one can describe our algorithm with labels that can only change at the beginning of a round, but we find the exposition clearer this way.}
Then, the node $v$ sends $L_{v}^{\gamma j}$ to its neighbors. These are the labels for the graph $G_{\gamma j}$ which the fixing addresses.

\newlength{\strutheight}
\settoheight{\strutheight}{\strut}
\begin{adjustbox}{valign=T,raise=\strutheight,minipage={\linewidth}}
\begin{wrapfigure}{r}{0pt}
	\includegraphics[trim=9cm 12cm 14cm 5.8cm, clip, scale=0.5]{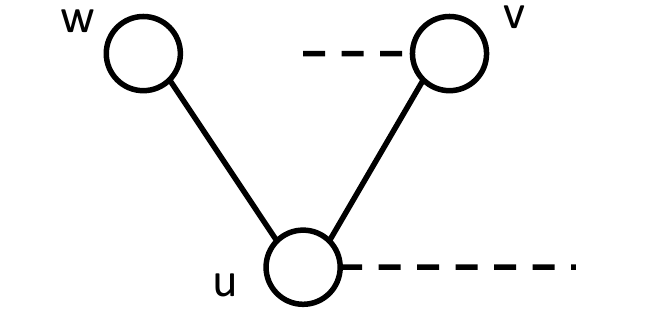}
\end{wrapfigure}
\strut{}
~~~ We stress that the new labels $L_v^{\gamma j}$ might not form consistent stars.
Instead, the nodes update $L_v^{\gamma j}$ and send it to all neighbors in order to maintain a common graph, with respect to which we show local consistency. As an example, consider a triangle $w,v,u$, undergoing the deletion of the edge $\{w,v\}$ and of another edge connecting $u$ with some other node (see Figure on the right). Suppose that $w$ immediately tries to fix the labels in its star, according to the fact that the edge $\{w,v\}$ does not exist, while $u$ is selected to fix its own star before $v$, without knowing of the deletion of the edge $\{w,v\}$. Both nodes then simultaneously try to change the label of $u$, and it could not be clear what $u$ should do, and which neighborhood of $u$ will be corrected. 
\vspace{1mm}
\end{adjustbox}

We continue describing the algorithm. On rounds $\gamma j+1$ to $\gamma j+3$ the nodes propagate the hashed timestamps owned by dirty nodes. That is, on round $\gamma j+1$, each node in $D_j$ broadcasts its hashed timestamp, and on the following two rounds all nodes broadcast the smallest hashed timestamp that they see (with respect to the order $\prec_H$). Every node $v$ in $D_j$ which does not receive a hashed timestamp that is smaller than its own becomes \emph{active}.

On the last round of the epoch, $\gamma j+4$, every active node $v$ computes the following candidate for a new label, denoting by $N^{\gamma j}_v=\{u_1,\ldots,u_d\}$ the neighborhood it had at round~$\gamma j$.
\begin{equation}
\label{eqn:LfixMM}
\ell_v =
\begin{cases}
\matched u_i, & \text{if } {L}_v^{\gamma j}=\matched u_i\\
\unmatched, & \text{if } {L}_v^{\gamma j}=\unmatched \text{ and } \text{for every } 1\leq i\leq d, L^{\gamma j}_{u_i} \neq \unmatched\\
\matched u_i, & \text{if } {L}_v^{\gamma j}=\unmatched \text{ and } 1\leq i\leq d \text{ is the smallest index}\\ &\text{for which }  L^{\gamma j}_{u_i} = \unmatched
\end{cases}
\end{equation}

Notice that $v$ has the required information to compute the above, even if additional topology changes occur during the rounds in which timestamps are propagated. Yet, we need to cope with the fact that topology changes may occur also throughout the current epoch and, for example, make active nodes suddenly become too close. For this, we denote by $T_j\subseteq I_j$ the set of \emph{tainted} nodes who received an indication of a topological change for at least one of their edges during the epoch $j$.

Now, only an active node $v$ which is not in $T_j$ sets $\hat{L}_v^{\gamma j+4} \leftarrow \ell_v$ and sends this new label to each neighbor $u$. 
Otherwise, an active node $v$ that is tainted (i.e., is in $T_j$) aborts and remains dirty for the next epoch.
Of course, if nodes $u$ and $v$ are neighbors at the beginning of an epoch but not when $v$ sends the computed label, then $u$ does not receive this information.

Finally, every active node $v\notin T_j$, if $\hat{L}^{\gamma j +4}_v = \matched u$ then $u$ updates $\hat{L}^{\gamma j +4}_u = \matched v$ (note that such $u$ has the required information since it receives $\ell_v$, as otherwise, if by the time that $\ell_v$ is computed it holds that $u$ and $v$ are no longer neighbors, then $v$ must be tainted).
At the end of round $\gamma j+4=\gamma (j+1)-1$, node $v$ becomes \emph{inactive} (even if it aborts) and is not included in $D'_j$, i.e., we initialize $D'_j=D_j\setminus \{v\mid v \text{ is active in epoch } j\}$ at the beginning of epoch $j$. Note that if $v$ is active but aborts then it is in $I_j$ and thus in $D_{j+1}$.

~\\\textbf{Round complexity:} We now prove that the algorithm has an amortized round complexity of $O(1)$, by proving $\incorrect(i)\leq 2\gamma\cdot\changes(i)$ for all $i$.
First, note that the algorithm communicates in each round where the graph is incorrect, and these communication rounds can be split into epochs, implying $\incorrect(i)\leq\gamma\cdot\epochs(i)$, where $\epochs(i)$ denotes the number of epochs of computation done by the algorithm until round $i$ (if round $i$ is the middle of an epoch then it does not affect the asymptotic behavior, so we can safely  ignore this partial epoch).
On the other hand, the node with minimal timestamp at the beginning of the $j$-th epoch becomes active during the epoch, and its timestamp is handled --- even if it becomes tainted by a change, the old timestamp is replaced by the new one. So, in each epoch at least one timestamp disappears from the system.
Now, since each topology change creates at most two timestamps, we have that the number of timestamps created until round $i$ is at most $2\cdot\changes(i)$, implying $\epochs(i)\leq 2\cdot\changes(i)$, and the claim follows.

Finally, we show that the timestamps can be represented by $O(\log n)$ bits. 
First, we claim that for every two timestamps $ts=(i,v,u)$ and $ts'=(i',v',u')$ such that $ts<ts'$, that are simultaneously owned by nodes at a given time, it holds that $i'-i \leq \gamma n$. Assume otherwise, and consider the first time when this condition is violated by a timestamp $ts'$, with respect to a previous timestamps $ts<ts'$. This means that the owner $v$ of $ts$ does not become active for more than $n$ epochs. Since up to this point in time there were no violations, in each epoch at least one timestamp was handled, and this was done in the desired ordered, i.e., all these labels where smaller than $ts$. So, $v$ not becoming active for more than $n$ epochs can only happen if at round $i$ there were more than $n$ timestamps which were then not yet handled, stored in various nodes. But there are at most $n$ nodes and each one stores at most one timestamp so the above is impossible. Since $i'-i \leq \gamma n$, we have that $H(ts) \prec_H H(ts')$, because $h(i) \prec_h h(i')$, as argued earlier.

Using the above we can also see that the worst case running time of our algorithm is $O(n)$. To see this, fix some node $v$ with an inconsistent star which does not experience topology changes touching its 1-hop neighborhood for $(\gamma+1) n$ rounds. This guarantees that its timestamp does not change throughout these rounds, and after $\gamma n$ rounds its timestamp must become a local minima. In the following epoch, if no changes occur within its 1-hop neighborhood then its star becomes consistent, which matches the definition of having a worst-case complexity of $O(n)$. Further, once a node $v$ successfully invokes a fixing of its star, the star remains consistently labeled as long as no topology changes touch the 1-hop neighborhood of $v$, thus we obtain strong guarantees for intermediate solutions.

\subparagraph{Correctness:}
For correctness we claim the following invariant holds at the end of round $i=\gamma j+4=\gamma(j+1)-1$:
For every two nodes $u,v$ that are clean at the end of the epoch and for which $\{u,v\}$ is an edge in $G_{\gamma j}$, it holds that (1) at least one of 
$\hat{L}_{u}^{\gamma j+4}$ and $\hat{L}_{v}^{\gamma j+4}$ is not $\unmatched$ and (2) if $\hat{L}_{u}^{\gamma j+4}=\matched v$ then $\hat{L}_{v}^{\gamma j+4}=\matched u$.

We prove the above by induction on the epochs. The base case holds trivially as during the first epoch the labels do not change, and we assume that the nodes start with a legal maximal matching for the initial graph. Now, assume the above invariants hold for epoch~$j-1$. 

For every two nodes $u,v$ that are clean at the end of the epoch and for which $\{u,v\}$ is an edge in $G_{\gamma j}$, if their labels do not change during the epoch, then the invariant follows from the induction hypothesis.

If only one of their labels changes, say that of $v$, then either $v$ is active and not tainted or there is a (single) neighbor $w$ of $v$ which is active and not tainted and makes $v$ change its label. In the former case, since the label $\ell_v$ of $v$ changes compared to $L_v^{\gamma j}$, it does not remain $\unmatched$ and does not remain $\matched x$ for some node $x$. So the new label $\ell_v$ must be $\matched y$, for some node $y$. Since the label of $u$ does not change, we have that $u\neq y$, and so if the label of $u$ is not $\unmatched$ then it cannot be $\matched v$ (as otherwise $L_v^{\gamma j}$ would be $\matched u$ and so $\ell_v$ would also be $\matched u$, thus did not change). In the latter case, if $v$ changes its label because of the new label $\ell_w$ that is sent to it by a neighbor $w$, then $\ell_w = \matched v$ and hence the new label of $v$ is set to $\matched w$.

Finally, if both of their labels change, then without loss of generality $v$ is active and not tainted and computes $\ell_v = \matched u$, making $u$ update its label to $\matched v$.
The crucial thing to notice here is that it cannot be the case that a node $w_v$ changes the label of $v$ and a different node $w_u$ changes the label of $u$ at the same time, because this implies that the distance between $w_v$ and $w_u$ is at most 3, in which case either at least one of them aborts due to an edge insertion, or the edge $\{u,v\}$ is inserted (maybe immediately after being deleted), but then $v$ and $u$ are not clean.

Since the invariant holds, we conclude that whenever $D_j=\emptyset$, it holds that the labeling is that of a maximal matching for $G_{\gamma j}$. Further, what the invariant implies is that some correctness condition holds even for intermediate rounds: at the end of every epoch $j$, the entire subgraph induced by set of nodes that are clean and have all of their neighborhood clean has labelings that are locally consistent.
\end{proof}

For node insertions and deletions, a direct application of the algorithm of Theorem~\ref{theorem:mm} increases the amortized complexity if all neighbors of a changed node (inserted or deleted) become dirty and $O(\Delta)$ timestamps are associated with this topology change. However, notice that when an edge is inserted, it suffices that \emph{only one} of its endpoints becomes dirty in the algorithm and gets matched to the other endpoint if needed. Hence, if a node is inserted, it suffices that the inserted node becomes dirty, and we do not need all of its neighbors to become so. An only slightly more subtle rule for deciding which nodes become dirty upon a node deletion gives the following.

\begin{theorem}
\label{theorem:mm-node}
There is a deterministic dynamic distributed fixing algorithm for maximal matching which handles edge/node insertions/deletions in $O(1)$ amortized rounds.
\end{theorem}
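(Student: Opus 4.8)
The plan is to reuse the entire epoch/timestamp/abort machinery of Theorem~\ref{theorem:mm} verbatim, and to refine only the rule deciding which nodes enter the dirty set $D_{j+1}$ upon a topology change, so that each node insertion or deletion still contributes only $O(1)$ timestamps. The two observations I would lean on are that a matching violation created by a topology change is always \emph{local to a single node}, and that the existing active/tainted mechanism already copes with a neighborhood changing abruptly during an epoch: any node receiving an indication of a change enters $T_j$ and aborts if it is active. Hence the substantive work is to specify which nodes must become dirty and to re-establish the matching correctness invariant of Theorem~\ref{theorem:mm} under the refined rule.

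For node insertions I would treat an inserted node $v$ as the single dirty endpoint of all the edges created by its insertion, invoking the fact that for an edge insertion one endpoint suffices. Since $v$ arrives with label $\unmatched$ and the pre-insertion solution is a legal maximal matching on the old nodes, the only possible violation is an edge $\{v,u\}$ with $u$ also unmatched. Making $v$ (and only $v$) dirty and letting it run Equation~\ref{eqn:LfixMM} repairs this: $v$ matches to its smallest-indexed unmatched neighbor, or remains unmatched if all neighbors are matched, which is a legal labeling of its star. Only $v$ owns a timestamp, so a node insertion creates $O(1)$ timestamps.

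For node deletions the refined (``slightly more subtle'') rule is that a node becomes dirty because of a deletion \emph{only if it loses its matched partner}: upon an abrupt deletion of $v$, the unique neighbor $u$ with $\hat{L}_u=\matched v$ (if any) detects this locally, sets itself to $\unmatched$ via Equation~\ref{eqn:LprepMM}, and becomes dirty, while every other neighbor of $v$ simply ignores the vanished edges. The correctness claim to prove is that this one dirty node is enough. Deleting an \emph{unmatched} node creates no violation, since each of its neighbors was, by pre-deletion maximality, matched to a node other than $v$ and stays matched. Deleting a \emph{matched} node $v$ can only unmatch $u$, and any newly created unmatched--unmatched edge necessarily has $u$ as an endpoint, so $u$'s own fixing --- re-matching to an available neighbor, or confirming that all its neighbors are matched --- restores the invariant. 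Thus a node deletion makes at most one node dirty and creates $O(1)$ timestamps.

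With each of the four change types contributing $O(1)$ timestamps, the accounting of Theorem~\ref{theorem:mm} carries over unchanged: at least one timestamp is handled per epoch, giving $\epochs(i)=O(\changes(i))$ and therefore $O(1)$ amortized complexity; the $O(\log n)$ timestamp bound also survives, since its proof only used that at most $n$ timestamps coexist (one per node), which still holds because $n$ upper-bounds the number of nodes. I expect the main obstacle to be exactly the correctness argument for the minimal deletion rule --- showing that it is safe \emph{not} to examine the possibly $\Omega(\Delta)$ many non-matched neighbors of a deleted node, by proving that every violation introduced by the deletion is witnessed at the former matched partner and is therefore repaired by that single node's fixing. The remaining concurrency hazards (for instance, a deletion suddenly bringing two active nodes close, or occurring mid-fixing) are already absorbed by the tainting/abort mechanism inherited from Theorem~\ref{theorem:mm}.
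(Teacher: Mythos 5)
Your proposal is correct and matches the paper's proof essentially verbatim: the same dirty-node rules (only the inserted node becomes dirty; only the former matched partner of a deleted node becomes dirty), the same $O(1)$-timestamps-per-change accounting, and the same correctness reasoning that a deletion cannot affect the local consistency of any non-matched neighbor. The extra detail you supply for the deletion case (pre-deletion maximality forces all neighbors of an unmatched node to be matched) is a slightly more explicit version of the argument the paper states in one line.
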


\begin{proof}
We modify the algorithm of Theorem~\ref{theorem:mm} as follows. Upon an insertion of a node $v$, the node $v$ becomes dirty. Upon a deletion of a node $v$ with neighbors $\{u_1,\ldots,u_d\}$, only the node $u_i$, for $1\leq i \leq d$, that is matched to $v$ (if there exists such a node) becomes dirty. 

The $O(1)$ amortized round complexity remains, as every topology change induces at most two new timestamps. Correctness still holds because it is not affected by a node insertion, which can be viewed as multiple edge insertions (in terms of correctness, but without paying this cost for the amortized time complexity), and it is not affected by a node deletion because for any other node $u_j \in \{u_1,\ldots,u_d\}$ such that $j \neq i$ it holds that the deletion of $v$ does not influence its local consistency.
\end{proof}

\section{An $O(1)$ amortized dynamic algorithm for coloring}
\label{section:coloring}
In the \emph{$c$-coloring} problem, each node must choose a color in $\{1,\ldots,c\}$, such that no two adjacent nodes have the same color.
In order to maintain a coloring in a dynamic graph, we use the same approach used for maintaining a maximal matching in the proof of Theorem~\ref{theorem:mm}.
Naturally, we use different labels for indicating a valid coloring, and so we modify the way in which they are updated during the algorithm. Yet, the overall structure of the algorithm remains the same. 

\begin{theorem}
	\label{theorem:deg-color}
	There is a deterministic dynamic distributed fixing algorithm for $(degree+1)$-coloring, which handles edge insertions/deletions in $O(1)$ amortized rounds.
\end{theorem}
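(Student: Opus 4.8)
The plan is to reuse the entire framework of the proof of Theorem~\ref{theorem:mm} essentially unchanged, altering only the semantics of the node labels and the two label-update rules. All of the timestamp machinery---assigning timestamps to edge changes, the deterministic hash $H$ with its order $\prec_H$, the partition of time into epochs of $\gamma$ rounds, and the notions of \emph{dirty}, \emph{active} and \emph{tainted} nodes---is kept verbatim. Consequently the amortized bound $\incorrect(i)\le 2\gamma\cdot\changes(i)$, the $O(n)$ worst-case bound, the $O(\log n)$-bit bound on hashed timestamps, and the intermediate-solution guarantees all transfer word for word: their proofs depend only on the timestamp/epoch mechanism and on the fact that each epoch removes at least one timestamp (the global, hence local, minimum becomes active and its timestamp is handled or replaced), never on the specific meaning of the labels.

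What changes is that a label is now a color. First, the preparation step becomes \emph{trivial}: unlike the $\matched u$ label of the matching algorithm, a color does not reference a neighbor, so a deletion of an incident edge cannot render a stored color meaningless; a node simply keeps its current color as $L_v^{\gamma j}$ and exchanges colors with its neighbors exactly as in Theorem~\ref{theorem:mm}, with no reset. Second, on the fixing round $\gamma j+4$, an active non-tainted node $v$ leaves its color unchanged if it is already valid (in $\{1,\dots,|N_v^{\gamma j}|+1\}$ and distinct from all colors in the round-$\gamma j$ snapshot of its neighbors), and otherwise sets it to the smallest color in $\{1,\dots,|N_v^{\gamma j}|+1\}$ not appearing among $\{L_u^{\gamma j}:u\in N_v^{\gamma j}\}$; such a color exists by pigeonhole, since the at most $|N_v^{\gamma j}|$ neighbors occupy at most $|N_v^{\gamma j}|$ of the $|N_v^{\gamma j}|+1$ available colors. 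The decisive simplification relative to matching is that $v$ repairs its star by changing \emph{only its own color} and never modifies a neighbor's label, so the final ``propagate the new label to the matched partner'' step of the matching algorithm is dropped entirely.

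For correctness I would prove, by induction on the epochs, the analog of the matching invariant: at the end of round $\gamma j+4$, for every two nodes $u,v$ that are clean at the end of the epoch and for which $\{u,v\}$ is an edge of $G_{\gamma j}$, the colors of $u$ and $v$ differ and each is at most one plus the corresponding node's degree in $G_{\gamma j}$. The base case and the ``no label changed'' case follow exactly as for matching. The only interesting case is when an active, non-tainted node $v$ recolors. Its new color avoids every neighbor color in the round-$\gamma j$ snapshot by construction, so the invariant can fail only if some neighbor $u$ of $v$ changes its color during the same epoch. This is where the machinery pays off: in coloring a node alters only its own color, so $u$'s color can change only if $u$ is itself active; but the local-minimum rule over the radius-$3$ propagation of timestamps forces any two active nodes to be at distance greater than $3$, hence non-adjacent, so no neighbor of $v$ is active. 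This is the exact analog of the ``crucial thing to notice'' in the matching proof: if a mid-epoch edge insertion were to bring two active nodes within this radius (in particular make them adjacent), both endpoints receive a change indication, become tainted, and abort, so no conflicting pair of recolorings ever takes effect.

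The step I expect to be the main obstacle is precisely this last one---ruling out simultaneous, conflicting recolorings of nearby nodes under topology changes that occur \emph{during} the epoch, which may shrink distances and alter degrees after the round-$\gamma j$ snapshot was taken. The separation-by-distance of active nodes together with the taint-and-abort mechanism is what neutralizes it, and it is the one place where the argument genuinely uses the epoch structure rather than just the local recoloring rule. The remaining wrinkle specific to $(degree+1)$-coloring, namely that an edge deletion can push a node's color above one plus its reduced degree, is absorbed uniformly by the fixing rule, which always selects a color inside the current palette. With these modifications the invariant yields a valid $(degree+1)$-coloring of $G_{\gamma j}$ whenever $D_j=\emptyset$, together with the same intermediate guarantees, while the amortized, worst-case, and message-size bounds are inherited verbatim from Theorem~\ref{theorem:mm}.
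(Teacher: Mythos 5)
Your proposal is correct and follows essentially the same route as the paper: reuse the maximal-matching framework verbatim (timestamps, hashing, epochs, dirty/active/tainted nodes), redefine the label as a color, have an active non-tainted node pick the minimal free color in its round-$\gamma j$ snapshot (valid by pigeonhole), and observe that, unlike matching, no neighbor's label ever needs to be rewritten, with the distance-$3$ separation of active nodes plus taint-and-abort ruling out conflicting concurrent recolorings. The one place you deviate is the preparation step: the paper does \emph{not} leave $L_v^{\gamma j}$ untouched, but resets it to $|N_v^{\gamma j}|$ whenever $v$'s degree has decreased, so that every label is always inside its current palette (matching the ``preparedness'' requirement of the general LFL framework in the appendix), at the price of possibly introducing a temporary conflict at a dirty node; your version instead tolerates a temporarily out-of-range color at a dirty node and lets the eventual fix restore it. Both choices support the stated guarantees, since the guarantees only concern clean nodes and the pigeonhole argument for the fixing step does not require neighbors' colors to lie in their own palettes.
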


\begin{proof}
	We describe the modifications that we make in the algorithm for maintaining a maximal matching from the proof of Theorem~\ref{theorem:mm}. For $(degree+1)$-coloring, the label of each node is its color. We use the same label notations $L_v^i$ as in our algorithm for maximal matching in the proof of Theorem~\ref{theorem:mm}, and we modify the way in which $L^{\gamma j}$ and $\ell$ are assigned new values in the assignments (\ref{eqn:LprepMM}) and (\ref{eqn:LfixMM}) of the algorithm, to correspond to a solution for $(degree+1)$-coloring rather than for maximal matching.
	
	The label $L_v^{\gamma j}$ stays the same as in the previous round (i.e., remains equal to $\hat{L}_v^{\gamma (j-1) + 4}$), unless the number of neighbors of $v$ decreases in $N_v^{\gamma j}$ compared with $N_v^{\gamma (j-1)}$, in which case $L_v^{\gamma j}$ is assigned to be $|N^{\gamma j}_v|$, where $N^{\gamma j}_v$ is the current star centered at $v$. Note that this gives a color that is in the correct range, but this may not be a valid coloring, as two neighbors may be assigned the same color if their neighborhoods have the same size.
	
	For the assignment of $\ell_v$ for every active node $v$, we set $\ell_v$ to be the minimal color not in $\{L_u^{\gamma j} ~\mid~ u \neq v, u \in N^{\gamma j}_v\}$. Notice that this means that the color of $v$ is valid. If $v$ is active and not in $T_j$ then $v$ sets $\hat{L}_v^{\gamma j +4}=\ell_v$ and sends $\ell_v$ to all of its neighbors. However, the neighbors do not have to change their own labels as a result (as opposed to the maximal matching algorithm, in which the label of $v$ may indicate to a neighbor $u$ that it is newly matched to $v$, in which case $u$ also updates its label $\hat{L}_{u}^{\gamma j +4}$).
	
	This guarantees the correctness of the color of $v$ w.r.t. its neighbors for every clean node $v$ for which all neighbors are clean . In particular, if all nodes are clean then the labels induce a proper $(degree+1)$-coloring. 
\end{proof}

The above algorithm directly applies also for node insertions, because, as in the maximal matching case, when a node $v$ is inserted it is sufficient to mark $v$ as dirty rather than also marking all of its neighbors. This holds since connecting the neighbors to $v$ does not invalid their colors, except maybe w.r.t. to color of $v$ itself, which will be fixed once $v$ becomes active. Specifically, their degrees do not decrease, so all their colors are in the relevant palates.

\begin{corollary}
	\label{corollary:deg-color}
	There is a deterministic dynamic distributed fixing algorithm for $(degree+1)$-coloring, which handles edge insertions/deletions and node insertions in $O(1)$ amortized rounds.
\end{corollary}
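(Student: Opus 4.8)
The plan is to adapt the algorithm of Theorem~\ref{theorem:deg-color} with a single change to the rule that governs which nodes become dirty, mirroring exactly the treatment of node insertions in the maximal matching case (Theorem~\ref{theorem:mm-node}). Specifically, upon the insertion of a node $v$ with incident edges to neighbors $\{u_1,\ldots,u_d\}$, I would mark only $v$ as dirty, and leave the neighbors $u_i$ clean. The inserted node starts with any color in range, e.g. $|N_v|+1$, which is trivially a valid $(degree+1)$-coloring value, and it will correct this color once it becomes active via the timestamp mechanism of the base algorithm.

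For the amortized round complexity, I would observe that a node insertion counts as a single topology change, and---because only $v$ becomes dirty---it induces at most two new timestamps, namely those owned by $v$ (recall that a node stores only the largest timestamp it owns). Hence the counting argument from the proof of Theorem~\ref{theorem:mm} carries over verbatim: the number of timestamps created up to round $i$ is at most $2\cdot\changes(i)$, while at least one timestamp is resolved per epoch, giving $\incorrect(i)=O(\changes(i))$ and therefore $O(1)$ amortized rounds.

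The correctness argument is the crux, and I would show that the color-validity invariant of Theorem~\ref{theorem:deg-color} is preserved even though the $u_i$ are not dirtied. The key observation is monotonicity of degrees: inserting $v$ only adds edges incident to $v$, adds no edge among the $u_i$, and decreases no node's degree. Consequently, for each neighbor $u_i$: (1) its palette $\{1,\ldots,|N_{u_i}|+1\}$ can only grow, so its existing color stays in range, and in particular the reassignment of $L^{\gamma j}$ that the algorithm triggers on a degree \emph{decrease} is never invoked here; and (2) its color remains consistent with all of its pre-existing neighbors. Thus the only edges whose endpoints could now share a color are the new edges $\{v,u_i\}$, and all such conflicts are resolved at once when $v$ becomes active and selects the minimal color outside $\{L_{u_i}^{\gamma j} \mid u_i\in N_v^{\gamma j}\}$.

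The main obstacle is confirming that keeping the neighbors clean is genuinely safe, i.e. that the invariant ``for every edge between two clean nodes whose entire neighborhoods are clean, the endpoints have distinct colors'' cannot be violated by this weaker dirtying rule. This reduces precisely to the monotonicity observation above: since the degrees of the $u_i$ only increase, no neighbor is ever forced to recolor by the insertion and none can drift out of consistency before $v$ acts. I expect no genuine difficulty beyond carefully matching this reasoning to the exact statement of the invariant, so the corollary follows.
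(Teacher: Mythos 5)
Your proposal is correct and follows essentially the same route as the paper: mark only the inserted node $v$ as dirty, and justify leaving the neighbors clean by observing that their degrees do not decrease, so their colors stay in their palettes and remain valid except possibly against $v$'s color, which is repaired once $v$ becomes active. Your write-up just spells out the monotonicity and timestamp-counting details that the paper states more briefly.
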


For node deletions, however, our algorithm incurs a large overhead, since the degrees of all the neighbors of a deleted node are decreased, which might require all of them to update their colors in the worst case. However, a slight relaxation into $(\Delta+1)$-coloring allows us to handle node deletions, since in this case the removal of any edge does not invalid the color of its endpoints, and thus no node needs to be marked as dirty upon node deletion.
\begin{theorem}
	\label{theorem:delta-color}
	There is a deterministic dynamic distributed fixing algorithm for $(\Delta+1)$-coloring which handles edge/node insertions/deletions in $O(1)$ amortized rounds.
\end{theorem}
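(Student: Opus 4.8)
The plan is to reuse the coloring algorithm of Corollary~\ref{corollary:deg-color} almost verbatim, changing only the palette and, crucially, the rule deciding which nodes become dirty. First I would fix the palette to $\{1,\dots,\Delta+1\}$ for every node, independent of its current degree, and define $\ell_v$ to be the smallest color in this fixed palette that is not used by any current neighbor of $v$; since every node has degree at most $\Delta$, such a color always exists, so the step is well defined. The preparation step of Corollary~\ref{corollary:deg-color} that re-assigns $L_v^{\gamma j}$ when the degree drops is simply dropped: with a fixed palette a color never leaves its legal range.

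The key structural observation, which I would establish first, is that in $(\Delta+1)$-coloring a \emph{deletion never makes a consistent star inconsistent}. A star centered at $v$ is consistent exactly when the color of $v$ differs from the colors of all its present neighbors, and deleting an incident edge---or deleting a neighbor, which removes the corresponding edge---only removes constraints. Consequently, upon an edge deletion or a node deletion \emph{no node needs to be marked dirty}: I would set the dirty-generating rule so that deletions contribute nothing, while an edge insertion marks its (at most two) endpoints dirty and a node insertion marks only the inserted node dirty, exactly as in Corollary~\ref{corollary:deg-color}. This is precisely the point at which $(degree+1)$-coloring failed under node deletion: there the palette $\{1,\dots,\deg(v)+1\}$ shrinks when a neighbor disappears, so a color can fall out of range and all $\Theta(\Delta)$ neighbors may need recoloring; the fixed palette decouples validity from degree and removes this obstacle.

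Given this, the amortized analysis is inherited unchanged from the proof of Theorem~\ref{theorem:mm}. Every topology change we act upon creates at most two timestamps (two for an inserted edge, one for an inserted node, zero for any deletion), so the number of timestamps created up to round $i$ is at most $2\cdot\changes(i)$; since at least one timestamp is handled per epoch we get $\epochs(i)\le 2\cdot\changes(i)$, and with $\incorrect(i)\le\gamma\cdot\epochs(i)$ this yields the claimed $O(1)$ amortized round complexity. Deletions are free for $\incorrect$ as well, since by the observation above they create no inconsistent stars.

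For correctness I would carry over the epoch invariant of the coloring proof: at the end of each epoch, every clean node whose neighborhood is entirely clean has a color differing from all its neighbors in $G_{\gamma j}$. The only genuinely new thing to check is the concurrency argument---that two adjacent active, non-tainted nodes cannot recolor to the same color. This is handled exactly as before: the three rounds of timestamp propagation guarantee that a node becomes active only if it owns a strict local-minimum timestamp in its short-radius neighborhood, so two neighbors present at round $\gamma j$ cannot both be active (one would see the other's smaller, distinct timestamp), and if the edge between two active nodes is inserted only \emph{during} the epoch then at least one of them is tainted and aborts. The main obstacle is thus isolating and proving the deletion observation cleanly, since it is what licenses marking zero dirty nodes on deletion; once that is in place, the remainder of the argument is a direct specialization of the $(degree+1)$-coloring proof.
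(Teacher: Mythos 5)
Your proposal is correct and follows essentially the same route as the paper, which justifies Theorem~\ref{theorem:delta-color} by exactly the observation you isolate: with the fixed palette $\{1,\dots,\Delta+1\}$, removing an edge (and hence deleting a node) never invalidates the color of any endpoint, so no node needs to be marked dirty upon a deletion, and insertions are handled as in Corollary~\ref{corollary:deg-color}. Your write-up simply makes explicit the details (dropping the degree-based re-preparation step, the timestamp accounting, and the concurrency argument) that the paper leaves implicit.
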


\section{An $O(1)$ amortized dynamic algorithm for 2-MWVC}
\label{sec:apx}

In unweighted graphs, taking the endpoints of a maximal matching immediately gives a 2-approximation of the minimum vertex cover, so our algorithm for maximal matching also allows us to maintain a 2-approximate of minimum vertex cover in unweighted graphs.
Clearly, taking the endpoints of a maximal matching is insufficient for obtaining a 2-approximation of the minimum \emph{weight} vertex cover (henceforth, MWVC). Instead, we employ a primal-dual approach: 
the nodes maintain an edge-weight function which represents the dual of the MWVC problem, and the vertex cover will be composed of all nodes whose constraints are tight. By dual-fitting, this gives a 2-approximation of the MWVC, as desired.

We start with defining the general framework. 
Let $G=(V,w,E)$ be a weighted graph, where $w: V \rightarrow \mathbb{R}^{+}$ is a vertex-weight function.
We define the dual weights by an edge-weight function $\delta: E \rightarrow \mathbb{R}^{+}$. We say that $\delta$ is \emph{$G$-valid} if for every $v \in V$, $\sum_{e: v \in e}{\delta(e)} \leq w(v)$, i.e., the sum of weights of edges that touch a vertex is at most the weight of that node in $G$.

We define the set $S_{\delta}=\{v \in V \mid w(v)\leq\sum_{e: v \in e}{\delta(e)}\}$ of nodes with tight constraints (and when $\delta$ is $G$-valid, we can require $w(v)=\sum_{e: v \in e}{\delta(e)}$ instead). The following theorem states that if $\delta$ is $G$-valid and $S_{\delta}$ is a vertex cover, then it is a $2$-approximation for MWVC.
This theorem can be proved either using the primal-dual framework~\cite{BarYehudaE81}, or the local-ratio technique~\cite{BarYehuda00}, and we also give a simple proof of it in Appendix~\ref{sec:pf-lr}.
\newcommand{\TheoremLR}{
	Let $\OPT$ be the minimal weight of a vertex cover of $G$.
	If $\delta$ is a $G$-valid function, then $\sum_{v \in S_{\delta}}{w(v)} \leq 2\OPT$.
	In particular, if $S_{\delta}$ is a vertex cover then it is a $2$-approximation for MWVC for $G$.
}
\begin{theorem}
	\label{thm:lr}
	\TheoremLR
\end{theorem}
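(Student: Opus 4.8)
The plan is to prove this by a double-counting argument that realizes weak LP duality combinatorially, avoiding any explicit appeal to the LP machinery. First I would observe that for a node $v \in S_\delta$, combining $G$-validity (which gives $\sum_{e: v \in e} \delta(e) \leq w(v)$) with the defining inequality of $S_\delta$ (which gives $w(v) \leq \sum_{e: v \in e} \delta(e)$) yields the tight equality $w(v) = \sum_{e: v \in e} \delta(e)$. This lets me rewrite the target quantity as $\sum_{v \in S_\delta} w(v) = \sum_{v \in S_\delta} \sum_{e: v \in e} \delta(e)$, turning a sum over nodes into a sum over incidences.

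The next step is to swap the order of summation and bound the edge contributions from above. Each edge $e \in E$ has exactly two endpoints, so in the double sum $\sum_{v \in S_\delta} \sum_{e: v \in e} \delta(e)$ the term $\delta(e)$ is counted once for each endpoint of $e$ that lies in $S_\delta$, hence at most twice. This gives $\sum_{v \in S_\delta} w(v) \leq 2 \sum_{e \in E} \delta(e)$, which already extracts the factor of $2$ that the approximation ratio requires.

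It then remains to establish the combinatorial weak-duality bound $\sum_{e \in E} \delta(e) \leq \OPT$. For this I would fix any minimum-weight vertex cover $C$, so that $\sum_{v \in C} w(v) = \OPT$. Since $C$ covers every edge, each $e$ has at least one endpoint in $C$, and therefore $\sum_{e \in E} \delta(e) \leq \sum_{v \in C} \sum_{e: v \in e} \delta(e)$, where now every $\delta(e)$ is counted at least once. Applying $G$-validity to each $v \in C$ gives $\sum_{v \in C} \sum_{e: v \in e} \delta(e) \leq \sum_{v \in C} w(v) = \OPT$. Chaining the three inequalities yields $\sum_{v \in S_\delta} w(v) \leq 2 \sum_{e \in E} \delta(e) \leq 2\OPT$, and the final claim follows immediately: if $S_\delta$ happens to be a vertex cover, then its weight is within a factor $2$ of the optimum.

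The only delicate point is keeping the two counting inequalities oriented correctly: for the sum over $S_\delta$ I overcount each edge (at most twice, since an edge has only two endpoints), whereas for the sum over the vertex cover $C$ I undercount (at least once, since $C$ covers the edge), and both directions are precisely what the argument needs. Beyond this bookkeeping there is no real obstacle, as $G$-validity supplies all the required per-node inequalities directly and the equality on $S_\delta$ follows for free from its definition.
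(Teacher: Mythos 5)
Your proof is correct and follows essentially the same route as the paper: bound $\sum_{v \in S_\delta} w(v)$ by the double sum of $\delta$ over incidences (counted at most twice per edge), then establish $\sum_{e \in E}\delta(e) \le \OPT$ by charging each edge to an endpoint in an optimal cover and invoking $G$-validity. The only cosmetic differences are that you note the tight equality on $S_\delta$ (the paper uses only the $\le$ direction, which is all that is needed) and that the paper makes the ``counted at least once'' step explicit via an assignment $e \mapsto v_e$; neither changes the substance.
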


Using the above we continue to proving our main result. We use the same mechanism as we used in our algorithm for maximal matching in the proof of Theorem~\ref{theorem:mm}, with slight changes which we describe below.

\begin{lemma}
	\label{lemma:apx}
	There is a deterministic dynamic distributed fixing algorithm for 2-approximate MWVC, which handles edge insertions/deletions in $O(1)$ amortized rounds.	
\end{lemma}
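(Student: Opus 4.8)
The plan is to reuse the entire machinery of the maximal matching algorithm from the proof of Theorem~\ref{theorem:mm} --- the timestamps and their hashing, the partition into epochs, and the dirty/active/tainted bookkeeping --- and to change only the label semantics and the two update rules (\ref{eqn:LprepMM}) and (\ref{eqn:LfixMM}). Instead of a matched/unmatched label, each node $v$ stores the dual weights $\delta(e)$ of its incident edges (equivalently, its slack $w(v)-\sum_{e\ni v}\delta(e)$ together with these values); both endpoints of an edge always agree on $\delta(e)$, maintained exactly as the matching algorithm keeps its labels symmetric. The cover is read off as $S_\delta$, the set of tight nodes. The two properties I would maintain are: (i) $\delta$ is $G$-valid at all times, i.e.\ $\sum_{e\ni v}\delta(e)\le w(v)$ for every $v$; and (ii) every edge of $G_{\gamma j}$ both of whose endpoints are clean with clean neighborhoods has a tight endpoint, so that $S_\delta$ is a vertex cover on this part. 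Given (i) and (ii), Theorem~\ref{thm:lr} immediately yields the $2$-approximation.

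For the prepare rule replacing (\ref{eqn:LprepMM}): when an incident edge $e$ of a dirty node $v$ is deleted, its dual weight $\delta(e)$ simply vanishes from $v$'s sum, which can only decrease $\sum_{e\ni v}\delta(e)$ and hence can only make $v$ leave the cover --- the exact analogue of a matched node becoming \unmatched{}, and the reason both endpoints of a deleted edge are marked dirty. For the fix rule replacing (\ref{eqn:LfixMM}): an active, non-tainted node $v$ greedily covers its still-uncovered incident edges. Scanning $N_v^{\gamma j}=\{u_1,\dots,u_d\}$ in index order, for each edge $\{v,u_i\}$ whose two endpoints are both non-tight it raises $\delta(\{v,u_i\})$ by $\min\{s_v,s_{u_i}\}$, where $s_v,s_{u_i}$ are the current slacks (neighbors report their slacks via the $L^{\gamma j}$ labels broadcast in the prepare round), and updates $s_v$ accordingly; once $v$ becomes tight every remaining incident edge is covered by $v$. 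Such a raise never exceeds a slack, so $G$-validity is preserved, and every scanned edge ends with a tight endpoint, so coverage holds. Since integer weights and min-of-slacks operations keep all $\delta$-values integral and bounded by $\max_v w(v)$, they fit in $O(\log n)$-bit messages under the standard assumption of polynomially bounded weights.

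The correctness invariant is then proved by induction on epochs exactly as in Theorem~\ref{theorem:mm}. The delicate point --- and the main obstacle --- is validity under concurrency: raising $\delta(\{v,u_i\})$ charges the shared neighbor $u_i$'s constraint, so I must rule out two active nodes both touching the same $u_i$ in one epoch, as together they could overdraw $u_i$'s capacity and violate (i). This is exactly what the timestamp mechanism buys: after the three rounds of timestamp propagation any active node is a strict local minimum within radius $3$ of the epoch-start graph, so two active nodes are at distance at least $4$; were both adjacent to $u_i$ they would be at distance at most $2$, a contradiction. Hence at most one active node modifies the edges at any $u_i$, its reported slack $s_{u_i}$ is accurate, and the raise stays within budget. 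Mid-epoch insertions that would bring two would-be-active nodes close are caught by the tainted set $T_j$: such a node aborts and stays dirty, precisely mirroring the distance-$3$ abort argument at the end of the matching proof. Edge deletions during the epoch only shrink sums and so cannot break validity, so (i) in fact holds for all nodes at all times, not merely for clean ones.

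The amortized analysis is unchanged from Theorem~\ref{theorem:mm}: each topology change spawns at most two timestamps, each epoch retires at least the globally minimal one, so $\epochs(i)\le 2\cdot\changes(i)$ and $\incorrect(i)\le 2\gamma\cdot\changes(i)$, giving $O(1)$ amortized rounds; the hashing of timestamps into $O(\log n)$ bits and the $O(n)$ worst-case bound carry over verbatim. Whenever $D_j=\emptyset$, properties (i) and (ii) hold globally and $S_\delta$ is a genuine $2$-approximate MWVC for $G_{\gamma j}$ by Theorem~\ref{thm:lr}.
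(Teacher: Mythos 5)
Your proposal is correct and follows essentially the same route as the paper's proof: the same dual/primal labeling (edge duals plus node slack, read off the cover as the tight nodes via Theorem~\ref{thm:lr}), the same prepare rule that revokes dual weight on deleted edges, the same greedy slack-exhausting fix rule scanning neighbors in order, the same excerpt-based trick to keep messages at $O(\log n)$ bits, and the unchanged timestamp/epoch amortization. The concurrency point you flag (two active nodes overdrawing a shared neighbor's capacity) is resolved exactly as in the paper, by the distance guarantee of the timestamp propagation together with the tainted-node abort.
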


\begin{proof}
	We assume here a graph $G=(V,w,E)$ with node weight function $w:V\to\{1,\ldots,W\}$, where $W$ is polynomial in $n$. 
	
	We use the same label notations $L^i_v$ for the label of node $v$ in round $i$. 
	Here the label of every node $v$ is an array of size $n$, where entries correspond to vertices in the graph. The entry $L_v^{i}[u]$ corresponds to the dual weight  $\delta(\{u,v\})$ of the edge $\{u,v\}$, if this edge exists. We call every such entry an \emph{edge excerpt} of the label.
	The special case of entry $L_v^{i}[v]$ corresponds to the remaining weight of the node $v$, which is $w(v)-\sum_{e: v \in e}{\delta(e)}$ . We call this entry the \emph{node excerpt} of the label. Initially, for every node $v$ the node excerpt is $L_v^{0}[v]=w(v)$ and all edge excerpts are 0.
	We use the same approach as in our algorithm for maximal matching in the proof of Theorem~\ref{theorem:mm}, and we modify the way in which $L^{\gamma j}_v$ and $\ell_{v}$ are assigned new values in the assignments (\ref{eqn:LprepMM}) and (\ref{eqn:LfixMM}) of the algorithm, to correspond to a solution for 2-MWVC rather than for maximal matching.
	
	Notice that now the labels are arrays, which results in large labels which cannot be sent using $O(\log n)$ bits of communication. 
	To overcome this, we modify the algorithm such that a node $v$ sends to each of its neighbors $u$ only $v$'s node excerpt $L_v^{i}[v]$, and the edge excerpt of their common edge $L_v^{i}[u]$.
	
	This is the only change we make with respect to the communication of the algorithm. Thus, it follows that the amortized running time remains $O(1)$, and it only remains to describe how we update $L_v^{\gamma j}$ at the beginning of every epoch (the analogue to assignment (\ref{eqn:LprepMM}) in the proof of Theorem~\ref{theorem:mm}), and how we update $\ell_v$ at the end of each epoch (the analogue to assignment (\ref{eqn:LfixMM}) in the proof of Theorem~\ref{theorem:mm}).
	
	Our desired properties from $L_v^{\gamma j}$ are: (P1) $\sum_{u \in V}{L_v^{\gamma j}[u]}=w(v)$, and (P2) $0 \leq L_v^{\gamma j}[v]\leq w(v)$. That is, $L_v^{\gamma j}[v]$ is indeed the remaining weight of $v$ and it is non-negative, and does not exceed its initial weight. We define $L_v^{\gamma j}$ as follows. For every $u \not\in N_v^{\gamma j -1} \oplus N_v^{\gamma j}$, $u\neq v$, we set $L_v^{\gamma j}[u]=\hat{L}_v^{\gamma j-1}[u]$, that is, these excerpts do not change. For every $u \in N_v^{\gamma j -1} \oplus N_v^{\gamma j}$ we set $L_v^{\gamma j}[u]=0$. Finally, we set $L_v^{\gamma j}[v]=\hat{L}_v^{\gamma j-1}+\sum_{u \in N_v^{\gamma j -1} \setminus N_v^{\gamma j}} \hat{L}_v^{\gamma j-1}[u]$, which means that the weight reductions over the removed edges are revoked. By a simple calculation, these updates guarantee that properties (P1) and (P2) hold.
	
	When computing $\ell_v$, we require: (P3) either $\ell_{v}[v]=0$ or for all $u\in N_v^{\gamma j}$ it holds that $\ell_u[u]=0$. For a labeling and a graph $G$, if these conditions hold for every $v\in V$ at round $i$, then it is immediate to see that the function $\delta: E \rightarrow \mathbb{R}^{+}$ defined by $\delta(\{u,v\})=L_{v}^i[u]=L_{u}^i[v]$ is $G$-valid, and the set $S=\{v \in V\mid L_{v}^i[v]=0\}$ is a vertex cover, which by Theorem~\ref{thm:lr} implies that $S$ is a 2-approximation for MWVC.
	
	Therefore, we define the assignment to $\ell_{v}$ as follows (the analogue to assignment~(\ref{eqn:LfixMM}) in the proof of Theorem~\ref{theorem:mm}). Denote the neighbors of $v$ at the corresponding round as $N_v^{\gamma j} = \{u_1, \dots, u_d \}$. For every $1\leq k \leq d$, we set $\ell_v[u_k] = L_{v}^{\gamma j}[u_k]-t_k$, where $t_k=\min\{L_{u_k}^{\gamma j}[u_k],L_{v}^{\gamma j}[v]-\sum_{1\leq p \leq k-1}{t_p}\}$. In addition, we set $\ell_{v}[v] = L_{v}^{\gamma j}[v]-\sum_{1\leq k \leq d}{t_k}$. Recall that if $v$ is active and not in $T_j$ then $v$ sets $\hat{L}_{v}^{\gamma j+4}$ to be $\ell_v$, and sends $\ell_v[u_k]$ to each $u_k$.  After receiving $\ell_v[u_k]$, each node $u_k$ sets $\hat{L}_{u_k}^{\gamma j+4}[v] = \ell_v[u_k]$, and then computes $t_k=\ell_v[u_k]-L_{u_k}^{\gamma j}[u_k]$ and sets $\hat{L}_{u_k}^{\gamma j+4}[u_k] = L_{u_k}^{\gamma j}[u_k]+t_k$. Simple calculations show that this guarantees that property~(P3) holds.
\end{proof}

For node insertions, we note that marking only the inserted node as dirty is sufficient, assuming the nodes of an empty graph start with labels for which each node excerpt is $w(v)$ and all edge excerpts are 0. This is because when the inserted node becomes active and gets fixed, it has a consistent star and sends updates to the relevant excerpts of neighboring nodes, in a way that maintains consistent stars for both the inserted node and its neighborhood. We thus have:

\begin{corollary}
	\label{cor:apx}
	There is a deterministic dynamic distributed fixing algorithm for a 2-approximation of a minimum weight vertex cover, which handles edge insertions/deletions and node insertions in $O(1)$ amortized rounds.
\end{corollary}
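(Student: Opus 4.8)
The plan is to reuse the algorithm of Lemma~\ref{lemma:apx} verbatim and only specify how a node insertion is absorbed into its epoch/timestamp machinery, exactly in the spirit of the node-insertion extensions for maximal matching (Theorem~\ref{theorem:mm-node}) and for coloring (Corollary~\ref{corollary:deg-color}). Concretely, upon the insertion of a node $v$ with new incident edges $\{v,u_1\},\dots,\{v,u_d\}$, I would mark \emph{only} $v$ as dirty, initialize its label to the empty-graph label of Lemma~\ref{lemma:apx} (i.e.\ node excerpt $w(v)$ and all edge excerpts $0$), and let $v$ obtain the current round number modulo $3\gamma n$ from any neighbor---possible by the model assumption---so that it can form a single timestamp $(i,v,\cdot)$ and its hash $H$. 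From this point on, $v$ behaves as an ordinary dirty node: it propagates its hashed timestamp, becomes active iff it is a local minimum and untainted, and on the last round of the epoch assigns $\ell_v$ by the 2-MWVC rule of Lemma~\ref{lemma:apx}. The neighbors $u_k$ are \emph{not} marked dirty.

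For the amortized complexity nothing really changes. A node insertion is a single topology change and, because only $v$ becomes dirty, it contributes at most two (in fact one) timestamps to the system, all owned by $v$. Hence the total number of timestamps created up to round $i$ is still at most $2\cdot\changes(i)$, so the chain $\incorrect(i)\le\gamma\cdot\epochs(i)$, $\epochs(i)\le 2\cdot\changes(i)$ from the proof of Theorem~\ref{theorem:mm} yields $\incorrect(i)\le 2\gamma\cdot\changes(i)=O(1)\cdot\changes(i)$. The timestamp-size argument is likewise untouched: ``one timestamp disappears per epoch'' and ``at most $n$ nodes each hold at most one timestamp'' still hold, so $i'-i\le\gamma n$ for any two coexisting timestamps and $O(\log n)$-bit hashed timestamps suffice.

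The heart of the matter---and the step I expect to be the main obstacle---is arguing that leaving the neighbors $u_k$ clean is correct. The key observation is that inserting $v$ with all-zero edge excerpts adds the edges $\{v,u_k\}$ with dual weight $\delta(\{v,u_k\})=0$, which leaves each neighbor's remaining weight $L_{u_k}[u_k]=w(u_k)-\sum_{e:u_k\in e}\delta(e)$ unchanged and therefore preserves $G$-validity (the hypothesis of Theorem~\ref{thm:lr}) at every $u_k$. The only labels that the insertion can render inconsistent are those of the new edges touching $v$ (property (P3) can fail for $v$, and for a $u_k$ that was previously covered only through its old neighbors); but since $v$ is dirty, each such $u_k$ has a dirty neighbor and is thus exempt from the consistency invariant until $v$ is fixed, exactly as in Theorem~\ref{theorem:mm}.

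It then remains to check that when $v$ becomes active and untainted, the assignment of $\ell_v$ from Lemma~\ref{lemma:apx} restores consistency of the whole star of $v$. Properties (P1) and (P2) hold for the freshly inserted $v$ by construction, and the greedy distribution $t_k=\min\{L_{u_k}^{\gamma j}[u_k],\,L_v^{\gamma j}[v]-\sum_{p<k}t_p\}$ either drives $\ell_v[v]$ to $0$ (placing $v$ in $S_\delta$, which covers every new edge) or consumes each neighbor's remaining weight (placing each $u_k$ in $S_\delta$); in either case (P3) is re-established for $v$, and since $t_k\le L_{u_k}^{\gamma j}[u_k]$ the dual sum at each $u_k$ stays at most $w(u_k)$, so $G$-validity is globally maintained. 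Because the only change to $u_k$'s neighborhood is the arrival of $v$, (P3) for $u_k$ also follows: after the fix either $u_k\in S_\delta$, or $v\in S_\delta$ together with $u_k$'s old covering neighbors. Once all stars are clean, Theorem~\ref{thm:lr} certifies that $S_\delta$ is a $2$-approximation of MWVC, and the tainting/aborting mechanism inherited from Theorem~\ref{theorem:mm} handles concurrent conflicting insertions with no change to the analysis.
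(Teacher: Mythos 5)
Your proposal is correct and follows essentially the same route as the paper: mark only the inserted node dirty, initialize it with node excerpt $w(v)$ and all edge excerpts $0$, observe that this preserves $G$-validity and preparedness at the neighbors so they need not become dirty, and let the standard fixing step for $\ell_v$ restore property (P3) for the whole star of $v$. The paper states this only as a short remark, so your more detailed verification of the complexity and correctness arguments is a faithful elaboration rather than a different proof.
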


The question of handling node deletions for 2-MWVC remains open, as a direct marking of all of its neighbors as dirty incurs an amortized overhead that is proportional to the number of neighbors.

\section{An $O(1)$ amortized dynamic algorithm for MIS}
\label{subsec:mis}

One can use a similar approach in order to obtain an MIS algorithm. However, when a node $v$ needs to be removed from the MIS due to an edge insertion, a neighbor $u$ of $v$ may need to join the MIS if none its other neighbors are in the MIS. 
One way to do this is to mark all of the neighbors of $v$ as dirty, but this violates the amortized time complexity as a single topology change may incur too many dirty nodes. Another option is to have $v$'s label include neighborhood information such that upon receiving this label, its neighbors know which of them should be moved into the MIS.
This would give a simple $O(1)$ amortized rounds algorithm for MIS, but only if messages are allowed to be large. 
Instead, we present a labeling that does not contain all the neighborhood information and uses only \emph{small} labels.

To handle the new subset of neighbors that needs to be added to the MIS in the aforementioned example, our approach is to have the active node simply indicate to all of its neighbors that they cannot remain clean and must check for themselves whether they need to change their labels. Of course, such a single topology change may now incur a number of dirty nodes that is the degree of this endpoint, which may be linear in $n$.

Yet, we make a crucial observation here: any node that becomes dirty in this manner, can be blamed on a previous topology change in which only one node becomes dirty. This implies a potential function for the budget of dirty nodes, to which we add 2 units for every topology change, and charge either $0$, $1$, $2$, or $d$ (current node degree) units for each invocation of the fixing function, in a manner that preserves the potential non-negative at all times. Note that due to the potential function argument, here we must start with an empty graph for the amortization to work, unlike previous problems, where we could start with any graph as long as the nodes have labels that indicate a valid solution. Roughly speaking, we rely on the fact that since all nodes that are in the graph start as MIS nodes because there are no edges, then a node switches from being an MIS node to being a non-MIS node only upon an insertion of an edge between two MIS nodes, and the other endpoint of the inserted edge safely remains in the MIS. Note that we impose the rule that an inserted node never makes a node switch from being an MIS node to being a non-MIS node, since the inserted node chooses to become an MIS node only if all of its neighbors are already non-MIS nodes.

\begin{theorem}
\label{theorem:mis}
There is a deterministic dynamic distributed fixing algorithm for MIS, which handles edge/node insertions/deletions in $O(1)$ amortized rounds.
\end{theorem}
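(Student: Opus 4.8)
\textbf{The plan} is to reuse the epoch/timestamp/abort machinery of Theorem~\ref{theorem:mm} essentially verbatim, changing only the label semantics and the rules for marking nodes dirty, so that the $O(1)$ amortized bound follows from a potential argument rather than from the ``at most two timestamps per change'' counting used for matching. The label of each node $v$ will simply indicate whether $v$ is in the MIS, together with the usual timestamp owned by $v$. As in the matching algorithm, at the start of an epoch each dirty node prepares a candidate label based on the topology at round $\gamma j$; timestamps are propagated over the three middle rounds; a dirty node whose hashed timestamp is a local minimum becomes active; and on the final round an active, non-tainted node fixes its star and pushes the necessary updates to its neighbors, while a tainted active node aborts and stays dirty. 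The crux, as flagged in the excerpt, is that when an active MIS-node $v$ must leave the MIS (because an inserted edge now joins it to another MIS node that wins the timestamp contest), $v$ cannot itself move all its now-uncovered neighbors into the MIS with a small message. Instead $v$ sends a single bit instructing every neighbor to \emph{become dirty} and re-evaluate its own star in a later epoch.

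The correctness part is the easier half and I would argue it by an epoch-indexed invariant analogous to the matching proof: at the end of epoch $j$, the subgraph induced by nodes that are clean and have an entirely clean neighborhood carries a locally consistent MIS labeling with respect to $G_{\gamma j}$ (no two adjacent clean MIS nodes, and every clean non-MIS node with clean neighborhood has an MIS neighbor). The fixing rule for an active node $v$ is: if $v$ has no MIS neighbor (in its round-$\gamma j$ star) it enters the MIS; otherwise it leaves and, if it was previously in the MIS, signals all neighbors to become dirty. The local-minimum timestamp rule guarantees that two active non-tainted nodes are far apart, so simultaneous conflicting decisions on an edge can only arise through a concurrent topology change, which forces a tainted abort exactly as in the matching case. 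The one genuinely new structural fact I would prove is the \emph{invariant that a node leaves the MIS only upon insertion of an edge between two MIS nodes}: since the graph starts empty, every node begins in the MIS, an inserted node never evicts anyone (it joins only if all its neighbors are already out), and an edge insertion between a non-MIS node and anything never creates an adjacency violation. This is what lets the eviction-and-signal event be charged cleanly.

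\textbf{The main obstacle}, and the heart of the theorem, is the amortized bound, since a single eviction can dirty $\Theta(\Delta)$ neighbors, which is linear in $n$ in the worst case. I would set up a potential function $\Phi$ equal to (a constant times) the number of currently dirty nodes, add a fixed budget of $2$ units to $\Phi$ for each topology change, and show that each epoch's work can be paid for while keeping $\Phi \ge 0$. The key accounting, as the excerpt previews, is to \emph{blame each mass-dirtying event on distinct earlier changes}: when an MIS node $v$ is evicted and dirties its current neighbors, each such neighbor $u$ became adjacent to $v$ (or to some node that $v$ now dominates) through an edge insertion $\{u,w\}$ that, at the time it occurred, did \emph{not} require fixing $w$'s label --- because $w$ was then a non-MIS node safely dominated, so that insertion was charged $O(1)$ and left a credit of unused budget. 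I would make this precise by maintaining an injection from the neighbors dirtied in an eviction to earlier ``cheap'' insertion events whose stored potential is now spent, so that the $d$ units charged for the eviction are covered by the $\Theta(d)$ units banked earlier. The delicate points I expect to fight with are (i) ensuring the injection is genuinely one-to-one across possibly many evictions over time --- i.e.\ that the same earlier insertion is never billed twice --- which requires resetting or consuming the relevant credit at each eviction; and (ii) handling the interaction with aborts and with the dependence on starting from the empty graph, since the ``every node starts in the MIS'' fact is exactly what guarantees the banked credit exists. Node insertions are cheap (only the new node becomes dirty, and by the rule it never evicts anyone) and node deletions I would handle by the same blaming scheme, marking dirty only the neighbors that actually lose domination, and charging them to the insertions that previously made those nodes dominated.
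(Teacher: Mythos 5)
Your plan coincides with the paper's proof in its main components: the same epoch/timestamp/abort machinery as for matching; an eviction rule under which an active node leaving the MIS merely instructs its uncovered \false{} neighbors to become dirty rather than relabeling them; the structural fact that, starting from the empty graph, a node leaves the MIS only via an insertion between two MIS nodes, of which only one endpoint (the smaller-ID one) is dirtied; a blaming argument that charges each mass-dirtied neighbor to the earlier cheap insertion that last turned it \false; and the same treatment of node insertions (only the new node is dirtied, and it never evicts anyone) and node deletions (only neighbors losing domination are dirtied).

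There is, however, one genuine missing ingredient, precisely at the point you flag as delicate but do not resolve: the paper adds the rule that \emph{when an active node changes its label to \true, all of its dirty neighbors become clean} (drop their timestamps), and without this rule the blame map is not injective. Concretely, let $z$ be \false{} and dominated only by $v$; when $v$ is evicted it dirties $z$; if some neighbor $z'$ of $z$ joins the MIS before $z$ acts, then $z$ still becomes active, sees $z'$, stays \false, and spends an epoch blamed on the insertion $c$ that originally turned $z$ \false. When $z'$ is later evicted it dirties $z$ again; since $z$'s label never returned to \true, the only change available to blame is the \emph{same} $c$, and this can repeat indefinitely, breaking $\epochs(i)\leq 2\cdot\changes(i)$. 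The cleaning rule guarantees that between two activations of $z$ triggered by mass-dirtying, $z$ must actually have entered the MIS and been forced out by a fresh change, which is exactly what makes the map injective; your proposed remedy of ``consuming the credit at each eviction'' only detects the double billing, it does not prevent the second activation from occurring and needing payment. Two smaller inaccuracies: the blamed insertion is cheap because it joined \emph{two MIS nodes} and dirtied only the smaller-ID endpoint, not because ``$w$ was then a non-MIS node safely dominated''; and the dirtying upon eviction must be restricted to \false{} neighbors with no other \true{} neighbor (as in your first paragraph), not ``all neighbors'' (as in your second), since a neighbor that is \true{} or still dominated has no earlier cheap change to blame.
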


\begin{proof}
We consider labels which are in $\{\true,\false\}$ and maintain that the set of nodes with the label $\true$ form an MIS. We start with an empty graph and all labels are $\true$.
We first define the assignments of $L_v^i$ and $\ell_v$ as in assignments (\ref{eqn:LprepMM}) and (\ref{eqn:LfixMM}) in the algorithm for maximal matching in the proof of Theorem~\ref{theorem:mm}. Then, we explain how we modify the algorithm further in order to avoid large messages with neighborhood information.

First, the label for $L_v^{\gamma j}$ does not change from the previous round, i.e., assignment (\ref{eqn:LprepMM}) is $L_v^{\gamma j} = \hat{L}_v^{\gamma (j-1)+4}$.
For assignment (\ref{eqn:LfixMM}) we set $\ell_v$ to be $\false$ if $L_{u_i}^{\gamma j}=\true$ for some $u_i\in N_v^{\gamma j}$ and otherwise we set $\ell_v$ to be $\true$.
If $v$ is active and not in $T_j$ then it sets $\hat{L}_v^{\gamma j + 4}$ to be $\ell_v$ and sends this label to all of its neighbors.
Notice that this is insufficient for arguing that the labels at the end of the epoch form an MIS if all nodes are clean, for the same reason as in the tricky example above: if $v$ leaves the MIS due to an edge insertion, its neighbors do not have enough information to decide which of them joins the MIS.
To overcome this challenge, we consider an algorithm similar to the one of Theorem~\ref{theorem:mm}, with the following modifications.

\begin{enumerate}
\item[\textbf{(1)}] When an edge $e=\{v,u\}$ is deleted, if the labels of both $u$ and $v$ are $\false$ then neither of them becomes dirty, and if only one of them is $\false$ then only this node becomes dirty.
\item[\textbf{(2)}] When an edge $e=\{v,u\}$ is inserted, if at least one of the labels of $u$ and $v$ is $\false$ then neither of them becomes dirty, and if both are $\true$ then only the node with smaller $ID$ becomes dirty.
\item[\textbf{(3)}] When a node $v$ is inserted then only $v$ becomes dirty. 
\item[\textbf{(4)}] When a node $v$ is deleted then a neighbor $z$ becomes dirty only if its label is $\false$ and it has no neighbor with a label $\true$.
\end{enumerate}

In order for a node $v$ to indicate that new labels may be needed for its neighbors, we add the following item:
\begin{enumerate}
\item[\textbf{(5)}] When an active node $v$ changes its label to $\false$, all of its neighbors marked $\false$ that do not have a neighbor marked $\true$ become dirty.
\end{enumerate}

As we prove in what follows, this allows the correct fixing process that we aim for, but this has the cost of having too many nodes become dirty. However, the crucial point here is that not all nodes that become dirty in items~(4) and~(5) will actually utilize their timestamp --- some will drop their timestamp before competing for becoming active, and hence we will not need to account for fixing them. That is, we add the following item:
\begin{enumerate}
\item[\textbf{(6)}] When an active node $v$ changes its label to $\true$, all of its dirty neighbors become clean.
\end{enumerate}

\subparagraph{Correctness:} The correctness follows the exact line of proof of the algorithm in Theorem~\ref{theorem:mm}, with the modification that making some neighbors dirty in item~(5) compensates for not being able to assign them directly with good new labels.
That is, at the end of the epoch, we still have the following guarantee: if all nodes are clean, then their labels induce an MIS;
otherwise, for every two clean neighbors, either exactly one of them is in the MIS, or both have a neighbor in the MIS.

\subparagraph{Amortized round complexity:} 
The proof follows the same lines as the previous complexity proofs, with the addition of a potential function argument. This is used to prove that the cumulative number of epochs in which any node becomes active is at most twice the number of topology changes. This proves our claim of an amortized $O(1)$ round complexity.

First, as in the former algorithms, we note that $\incorrect(i)\leq\gamma\cdot\epochs(i)$, and at each epoch at least one timestamp is handled. Thus, we only need to upper bound the number of timestamps created by round $i$ as a function of $\changes(i)$. However, here we need to be much more careful and we can not simply account each change for two timestamps, as some changes create much more timestamps than others.

Consider a node $v$ that is deleted in round $i$ as in item~(4) (or $v$ is active and marked $\false$ as in item~(5)), and a set $Z=\{z_1,\dots,z_k\}$ of its neighbors that become dirty by satisfying the condition in item~(4) (or item~(5)) above, ordered by their timestamps $(i,v,z_j)$ for $1\leq j\leq k$, as induced by this topology change. For each $1\leq j\leq k$, if a node $z_j$ becomes active due to this timestamp, then by item~(6), starting from round~$i$ none of its neighbors change their label to $\true$.
Consider the last round $i'$ before round $i$ in which the label of $z_j$ is $\true$ ($i'$ exists since this condition occurs initially when the graph is empty).
We claim that the topology change whose associated active node changed the label of $z_j$ to $\false$ in round $i'+1$, is either an insertion of an edge $\{z_j,u\}$ that satisfies the condition of item~(2) with $ID(z_j) < ID(u)$, or an insertion of the node $z_j$ which connects it to at least one node whose label is $\true$. The reason for this is that these are the only topology changes which cause $z_j$ to be assigned the label $\false$.

Finally, notice that these topology changes both induce only a single dirty node (thus a single active node and a single epoch), and therefore we can blame $z_j$ becoming active on the corresponding topology change.
This is an injective mapping, as any other node cannot blame these changes (they are changes that made $z_j$ dirty), and $z_j$ itself may become active again in the future due to satisfying the condition in item~(4) (or item~(5)) above only if its label is changed to $\false$ again in between.

In other words, this blaming argument implies $\epochs(i)\leq 2\cdot\changes(i)$ here as well, completing the proof.
\end{proof}

\section{Discussion}
\label{section:discussion}
This paper gives dynamic distributed algorithms for various fundamental tasks that have a constant amortized round complexity despite working in a very harsh environment.
We believe that fixing some LCL tasks with radius $r>1$ can be handled in a similar manner, but we leave this for future work.

A particular open question is whether one can improve the worst-case round complexity of our algorithm, perhaps to a logarithmic in $n$ complexity, as in~\cite{BambergerKM18}, without sacrificing its amortized round complexity.

~\\\textbf{Acknowledgments:} The authors are indebted to Yannic Maus for invaluable discussions which helped us pinpoint the exact definition of fixing in dynamic networks that we eventually use. 
We also thank Juho Hirvonen for comments about an earlier draft of this work, and Shay Solomon for useful discussions about his work in~\cite{AssadiOSS18,AssadiOSS19}. We thank Hagit Attiya and Michal Dory for their input about related work.

\bibliography{bibl}

\begin{thebibliography}{10}

\bibitem{AssadiOSS19}
Sepehr Assadi, Krzysztof Onak, Baruch Schieber, and Shay Solomon.
\newblock Fully dynamic maximal independent set with sublinear in n update
  time.
\newblock In {\em SODA 2019}.

\bibitem{AssadiOSS18}
Sepehr Assadi, Krzysztof Onak, Baruch Schieber, and Shay Solomon.
\newblock Fully dynamic maximal independent set with sublinear update time.
\newblock In {\em STOC 2018}.

\bibitem{AttiyaDS89}
Hagit Attiya, Danny Dolev, and Nir Shavit.
\newblock Bounded polynomial randomized consensus.
\newblock In {\em PODC 1989}.

\bibitem{AzarKP10}
Yossi Azar, Shay Kutten, and Boaz Patt{-}Shamir.
\newblock Distributed error confinement.
\newblock {\em {ACM} Trans. Algorithms}, 2010.

\bibitem{BalliuBOS19}
Alkida Balliu, Sebastian Brandt, Dennis Olivetti, and Jukka Suomela.
\newblock How much does randomness help with locally checkable problems?
\newblock {\em CoRR}, abs/1902.06803, 2019.

\bibitem{BalliuHKLOS18}
Alkida Balliu, Juho Hirvonen, Janne~H. Korhonen, Tuomo Lempi{\"{a}}inen, Dennis
  Olivetti, and Jukka Suomela.
\newblock New classes of distributed time complexity.
\newblock In {\em STOC 2018}.

\bibitem{BambergerKM18}
Philipp Bamberger, Fabian Kuhn, and Yannic Maus.
\newblock Local distributed algorithms in highly dynamic networks.
\newblock In {\em IPDPS 2019}.

\bibitem{BarYehuda00}
Reuven Bar{-}Yehuda.
\newblock One for the price of two: a unified approach for approximating
  covering problems.
\newblock {\em Algorithmica}, 2000.

\bibitem{BarYehudaE81}
Reuven Bar{-}Yehuda and Shimon Even.
\newblock A linear-time approximation algorithm for the weighted vertex cover
  problem.
\newblock {\em J. Algorithms}, 1981.

\bibitem{BarenboimEBOOK}
Leonid Barenboim and Michael Elkin.
\newblock {\em Distributed Graph Coloring: Fundamentals and Recent
  Developments}.
\newblock Synthesis Lectures on Distributed Computing Theory. Morgan {\&}
  Claypool Publishers, 2013.

\bibitem{BarenboimEG18}
Leonid Barenboim, Michael Elkin, and Uri Goldenberg.
\newblock Locally-iterative distributed ({\(\Delta\)}+1)-coloring below
  szegedy-vishwanathan barrier, and applications to self-stabilization and to
  restricted-bandwidth models.
\newblock In {\em PODC 2018}.

\bibitem{Behnezhad2019}
Soheil Behnezhad, Mahsa Derakhshan, Mohammadtaghi Hajiaghayi, Cliff Stein, and
  Madhu Sudan.
\newblock Fully dynamic maximal independent set with polylogarithmic update
  time.
\newblock {\em FOCS 2019}.

\bibitem{BhattacharyaCHN18}
Sayan Bhattacharya, Deeparnab Chakrabarty, Monika Henzinger, and Danupon
  Nanongkai.
\newblock Dynamic algorithms for graph coloring.
\newblock In {\em SODA 2018}.

\bibitem{BonneC19}
Matthias Bonne and Keren Censor{-}Hillel.
\newblock Distributed detection of cliques in dynamic networks.
\newblock In {\em ICALP 2019}.

\bibitem{BrandtFHKLRSU16}
Sebastian Brandt, Orr Fischer, Juho Hirvonen, Barbara Keller, Tuomo
  Lempi{\"{a}}inen, Joel Rybicki, Jukka Suomela, and Jara Uitto.
\newblock A lower bound for the distributed lov{\'{a}}sz local lemma.
\newblock In {\em STOC 2016}.

\bibitem{BrandtHKLOPRSU17}
Sebastian Brandt, Juho Hirvonen, Janne~H. Korhonen, Tuomo Lempi{\"{a}}inen,
  Patric R.~J. {\"{O}}sterg{\aa}rd, Christopher Purcell, Joel Rybicki, Jukka
  Suomela, and Przemyslaw Uznanski.
\newblock {LCL} problems on grids.
\newblock In {\em PODC 2017}.

\bibitem{Censor-HillelHK16}
Keren Censor{-}Hillel, Elad Haramaty, and Zohar~S. Karnin.
\newblock Optimal dynamic distributed {MIS}.
\newblock In {\em PODC 2016}.

\bibitem{CKS}
Keren Censor-Hillel, Victor~I. Kolobov, and Gregory Schwartzman.
\newblock Finding subgraphs in highly dynamic networks.
\newblock In {\em submission}, 2020.

\bibitem{ChangKP19}
Yi{-}Jun Chang, Tsvi Kopelowitz, and Seth Pettie.
\newblock An exponential separation between randomized and deterministic
  complexity in the {LOCAL} model.
\newblock {\em {SIAM} J. Comput.}, 2019.

\bibitem{ChangP19}
Yi{-}Jun Chang and Seth Pettie.
\newblock A time hierarchy theorem for the {LOCAL} model.
\newblock {\em {SIAM} J. Comput.}, 2019.

\bibitem{Chechik2019}
Shiri Chechik and Tianyi Zhang.
\newblock Fully dynamic maximal independent set in expected poly-log update
  time.
\newblock {\em FOCS 2019}.

\bibitem{Dolev2000}
Shlomi Dolev.
\newblock {\em Self-Stabilization}.
\newblock {MIT} Press, 2000.

\bibitem{DuZ2018}
Yuhao Du and Hengjie Zhang.
\newblock Improved algorithms for fully dynamic maximal independent set.
\newblock {\em CoRR}, abs/1804.08908, 2018.
\newblock URL: \url{http://arxiv.org/abs/1804.08908}.

\bibitem{Elkin07}
Michael Elkin.
\newblock A near-optimal distributed fully dynamic algorithm for maintaining
  sparse spanners.
\newblock In {\em PODC 2007}.

\bibitem{GhaffariKM17}
Mohsen Ghaffari, Fabian Kuhn, and Yannic Maus.
\newblock On the complexity of local distributed graph problems.
\newblock In {\em STOC 2017}.

\bibitem{GuellatiK10}
Nabil Guellati and Hamamache Kheddouci.
\newblock A survey on self-stabilizing algorithms for independence, domination,
  coloring, and matching in graphs.
\newblock {\em J. Parallel Distrib. Comput.}, 2010.

\bibitem{GuptaK18}
Manoj Gupta and Shahbaz Khan.
\newblock Simple dynamic algorithms for maximal independent set and other
  problems.
\newblock {\em CoRR}, abs/1804.01823, 2018.
\newblock URL: \url{http://arxiv.org/abs/1804.01823}, \href
  {http://arxiv.org/abs/1804.01823} {\path{arXiv:1804.01823}}.

\bibitem{Italiano91}
Giuseppe~F. Italiano.
\newblock Distributed algorithms for updating shortest paths (extended
  abstract).
\newblock In {\em WDAG 1991}.

\bibitem{KawarabayashiS18}
Ken{-}ichi Kawarabayashi and Gregory Schwartzman.
\newblock Adapting local sequential algorithms to the distributed setting.
\newblock In {\em DISC 2018}.

\bibitem{KonigW13}
Michael K{\"{o}}nig and Roger Wattenhofer.
\newblock On local fixing.
\newblock In {\em OPODIS 2013}.

\bibitem{Kuhn2010}
Fabian Kuhn, Nancy~A. Lynch, and Rotem Oshman.
\newblock Distributed computation in dynamic networks.
\newblock In {\em STOC 2010}.

\bibitem{KuttenP99}
Shay Kutten and David Peleg.
\newblock Fault-local distributed mending.
\newblock {\em J. Algorithms}, 1999.

\bibitem{KuttenP00}
Shay Kutten and David Peleg.
\newblock Tight fault locality.
\newblock {\em {SIAM} J. Comput.}, 2000.

\bibitem{NaorS95}
Moni Naor and Larry~J. Stockmeyer.
\newblock What can be computed locally?
\newblock {\em {SIAM} J. Comput.}, 1995.

\bibitem{ParterPS16}
Merav Parter, David Peleg, and Shay Solomon.
\newblock Local-on-average distributed tasks.
\newblock In {\em SODA 2016}.

\bibitem{Solomon16}
Shay Solomon.
\newblock Fully dynamic maximal matching in constant update time.
\newblock In {\em FOCS 2016}.

\bibitem{Suomela13}
Jukka Suomela.
\newblock Survey of local algorithms.
\newblock {\em {ACM} Comput. Surv.}, 2013.

\end{thebibliography}

\appendix


\section{Proof of Theorem~\ref{thm:lr}}

\begin{theorem-repeat}{thm:lr}\cite{BarYehudaE81,BarYehuda00}
	\TheoremLR
\end{theorem-repeat}

\label{sec:pf-lr}
\begin{proof}
	For every $v \in S_{\delta}$ it holds that 
	$w(v)\leq \sum_{e: v \in e}{\delta(e)}$. This gives:
	\begin{align*}
		\sum_{v \in S_{\delta}}{w(v)} \leq \sum_{v \in S_{\delta}}{\sum_{e: v \in e}{\delta(e)}} 
		\leq \sum_{v \in V}{\sum_{e: v \in e}{\delta(e)}}
		\leq  2 \sum_{e \in E}{\delta(e)}.
	\end{align*}
	It remains to prove that $\sum_{e \in E}{\delta(e)}\leq\OPT$.
	To this end, let $S_{\OPT}$ be a cover of minimal weight,
	and associate each edge $e\in E$ with its endpoint $v_e$ in $S_{\OPT}$ (choose an arbitrary endpoint if both are in $S_{\OPT}$).
	The weight $w(v)$ of each $v \in S_{\OPT}$ is at least $\sum_{e: v_e=v}{\delta(e)}$, because it is at least $\sum_{e: v \in e}{\delta(e)}$. Hence, $\OPT = \sum_{v \in S_{\OPT}}{w(v)} \geq \sum_{v \in S_{\OPT}}{\sum_{e: v_e=v}{\delta(e)}} = \sum_{e \in E}{\delta(e)}$,
	as desired.
\end{proof}

\section{Generalization: The family of locally fixable labelings}
\label{section:LFL general case}

To address the question of local fixability, we define a subclass of LCLs, which we call \emph{locally fixable labelings} (LFLs). Despite a deceiving first impression that such a definition may be straightforward, our definition turns out to be highly non-trivial, as we discuss in Appendix~\ref{sec:lfl}.
Our definition is entirely combinatorial, in the spirit of the definition of LCLs.

After defining LFLs, we present a simple, deterministic distributed template algorithm that rapidly fixes LFLs with labels made of small \emph{pieces}, a notion that will be made clear shortly. Specifically, we give an algorithm that uses only $O(1)$ communication rounds when amortized over the number of topology changes,
works even in the highly-dynamic setting where an \emph{unbounded number} of network links may appear or be dropped adversarially in every round of computation, and nodes can send messages of no more than $O(\log{n})$ bits.

All the fundamental LCL tasks discussed in our paper, i.e., maximal matching, $(degree+1)$-coloring, or 2-approximation for the minimum \emph{weighted} vertex cover (2-MWVC), can be described as LFLs with labels made of small pieces, which immediately implies the above fast fixing algorithm applies to these tasks, giving an alternative proof for our main theorems.

We further show that for some tasks, fast fixing is also possible with node insertions and deletions. Here, we consider the more severe case of \emph{abrupt} deletions, where  a deleted node does not have a chance to inform its neighbors of its upcoming departure from the system.

\subsection{LFLs and a fast dynamic fixing algorithm}
\label{sec:contribution}

\textbf{The challenge in defining LFLs.}
The first property that an LFL definition needs to have should pinpoint why the sinkless orientation example is unfixable in a fast manner.
Intuitively, in sinkless orientation, a local change may have a global impact on the labels of many and far away nodes, while an LFL needs to be fixed locally.
To rule out labelings with this undesired property, the LFL definition requires the existence of a \emph{local} fixing function. Such a function should potentially receive as input a star centered at a node $v$, along with its labeling which may be locally inconsistent, and produce fixed labels for $v$ and possibly its neighbors, such that local consistency holds for this star, and is maintained for all stars of these nodes if they were previously so.

However, this is still insufficient. Imagine the task of fixing a maximal matching when an edge is inserted to or deleted from the network graph. With a standard labeling that simply indicates which edge is matched, the endpoints of the changed edge can fix the labels of themselves and possibly their neighbors to produce a correct solution. Combinatorially, this is the fixing function that we refer to above. Yet, already here, a first subtlety arises: The endpoints cannot operate concurrently, because if a matching edge between them is deleted, they might try to match themselves to a neighbor that is common to both.

For a general task that one wishes to fix, this motivates simply requiring that the nodes that received indication of a topology change become \emph{active} for fixing at different times.
But this still does not address our aim of coping with an unlimited number of concurrent topology changes.
The reason is that by the time that a node becomes active and fixes the labels of its neighbors, we are no longer guaranteed that the old labels of this node and its neighbors correspond to an earlier locally consistent labeling, because of multiple topology changes that they may have undergone in the meanwhile. In other words, the old labels can be anything, so our fixing function is not promised to produce a locally consistent output.

This motivates an LFL definition that incorporates a two-step fixing algorithm. For the first step, at a high level, we define a function that \emph{prepares} a node for fixing by changing only its own label without the need for any communication. This produces a label that, when given as input to the fixing function, allows the fixing function to guarantee a locally consistent output.  Then, the second fixing step is the applying the fixing function itself, which may result in new labels for an active node and its neighbors.
For the maximal matching example, this intuitively can be seen as follows: when a node $v$ has its matched edge deleted from the graph, it first locally prepares itself by changing its label to indicate that it is now unmatched and ready for being matched if possible. Then, even before $v$ is able to become active for fixing, if one of its neighbors $u$ becomes active for fixing then $u$ can already get matched with $v$.

The above discussion gives the main intuitions for the two functions that underlie our LFL definition. To capture the properties that we need them to promise, we need several definitions of when the labels of neighbors are \emph{correct for their edge}, and when the view of a node is in essence locally consistent.
However, algorithmically, recall that we are limited to sending small messages in our setting, and hence may not be able to send entire labels for invoking the fixing function (this will become crucial when we show an LFL for fixing a 2-approximation for the minimum weight vertex cover). Hence, our labels need to be composed of small pieces such that the fixing function only requires a couple of pieces of every label. We call these pieces \emph{excerpts} and we thus need a stronger definition for views that replaces local consistency, because a node is unable to receive the entire label of a neighbor, but only a constant number of excerpts.

To summarize, an LFL will be defined as a tuple which, in addition to the input and output label sets $\Sigma$ and $\Gamma$, consists of a set that characterizes when the labels of neighbors are correct for their edge and a set that characterizes when the label of a node is ready for fixing (these two sets replace $\mathcal{C}$ as they induce local consistency), as well as two functions, one for preparing towards fixing and one for fixing. The formal full-fledged definition of LFLs is given in Appendix~\ref{sec:lfl}. When all inputs and all excerpts are small, we call this a \emph{bounded~LFL}.

~\\\noindent
\textbf{A fast dynamic fixing algorithm.}
In essence, because we might need a node to change the labels of its neighbors after a topology change, we make sure that concurrent fixing always happens for nodes that are not too close, and other nodes wait even if their labels are not yet correct. To this end, our method is to assign a time\-stamp to each change in the graph and fix a node that suffered from the change only if its timestamp is a local minimum in some neighborhood, thus avoiding conflicting concurrent fixes.
However, the restriction on the size of messages forbids unbounded timestamps.

To resolve this issue, we utilize ideas from the literature on shared memory algorithms, e.g.,~\cite{AttiyaDS89}, for deterministically hashing the timestamps into a small bounded domain so that the nodes can afford sending a hashed timestamp in a single small message, and we do so in a way that preserves the total order over timestamps. But this alone is still insufficient, because we need to cope with the uncontrolled number of topology changes, which may, for example, suddenly connect two nodes that were previously far enough to become active simultaneously, but can now interfere with each other. We carefully take care of such cases, where our timestamps have been \emph{cheated} by the topology changes.

Thus, our main algorithmic contribution lies in proving the following theorem, which holds in a model with an unbounded number of topology changes that may occur concurrently, and when only a logarithmic number of bits can be sent in a message.
This theorem is proven in Appendix~\ref{section:alg}.
\newcommand{\AlgTheorem}
{
	For every bounded LFL $\mathcal{L}$, there is a deterministic dynamic distributed fixing algorithm which handles edge insertions/deletions in $O(1)$ amortized rounds.
}
\begin{theorem}
	\label{theorem:alg}
	\AlgTheorem
\end{theorem}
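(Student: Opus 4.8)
The plan is to lift the template of Theorem~\ref{theorem:mm} essentially unchanged, replacing the two problem-specific assignments~(\ref{eqn:LprepMM}) and~(\ref{eqn:LfixMM}) by the two functions $\prep$ and $\fix$ that the LFL $\mathcal{L}$ supplies. Concretely, I would keep the entire scaffolding: the $\gamma$-round epochs, the set $D_j$ of dirty nodes, the timestamp $(i,u,v)$ associated with each change together with the hash $H$ of~\cite{AttiyaDS89} that compresses timestamps to $O(\log n)$ bits while preserving their order, and the three middle rounds of each epoch during which hashed timestamps are propagated so that a dirty node becomes \emph{active} exactly when it owns a local-minimum timestamp. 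The only substantive change to the algorithm itself is that on round $\gamma j$ every dirty node applies $\prep$ to its own label (a purely local step requiring no communication), and on round $\gamma j+4$ every active, non-tainted node applies $\fix$ to its prepared star and pushes the resulting excerpts to its neighbors. Because $\mathcal{L}$ is a \emph{bounded} LFL, each label is composed of small excerpts and $\fix$ consults only a constant number of them per neighbor, so all messages still fit in $O(\log n)$ bits, exactly as in the array-based 2-MWVC algorithm of Lemma~\ref{lemma:apx}.

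For the amortized round complexity I would reuse the argument of Theorem~\ref{theorem:mm} without modification. The bound $\incorrect(i) \le \gamma \cdot \epochs(i)$ holds because the algorithm communicates only during epochs in which some star is inconsistent, and $\epochs(i) \le 2\,\changes(i)$ holds because each topology change spawns at most two timestamps while at least one timestamp is retired in every epoch (the global-minimum owner becomes active, and even if it becomes tainted its old timestamp is replaced by a new one). The same induction as before caps the gap $i'-i$ between any two coexisting timestamps at $\gamma n$, which is what licenses the wrap-around hashing and simultaneously yields the $O(n)$ worst-case bound; none of this reasoning depends on the particular labeling, so it transfers verbatim.

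The heart of the proof---and the step I expect to be the main obstacle---is the correctness invariant, since here all the problem-specific reasoning of Theorems~\ref{theorem:mm}--\ref{theorem:mis} must be replaced by the abstract guarantees baked into the LFL definition. The plan is to prove, by induction on epochs, an invariant stating that at the end of each epoch every edge both of whose endpoints are clean is ``correct for its edge'' in the sense of $\mathcal{L}$, and every clean node whose entire neighborhood is clean has a locally consistent view. As in Theorem~\ref{theorem:mm}, the induction step splits according to how many endpoints of a given edge changed their label, and the leverage comes from the two-step contract of the LFL: $\prep$ transforms an arbitrary, possibly inconsistent label into one that lies in the ``ready for fixing'' set, and $\fix$, applied to a ready star, is guaranteed to output labels that are correct for their edges and to preserve the consistency of the already-consistent stars of the active node and its neighbors. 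The genuinely delicate point is the concurrency argument: I must show that two distinct active nodes $w_u,w_v$ cannot simultaneously overwrite the two endpoints $u,v$ of a single edge. Exactly as in Theorem~\ref{theorem:mm}, the radius-$1$ structure forces $\mathrm{dist}(w_u,w_v)\le 3$, so either one of them fails the local-minimum test, or an in-epoch topology change brought them close---in which case the \emph{tainting} mechanism makes at least one of them abort and remain dirty, and the endpoints $u,v$ are then not clean. Checking that the LFL consistency sets are closed under precisely these aborts, and that the excerpt-only communication still delivers to $\fix$ all the pieces it requires, is where the full strength of the bounded-LFL definition is invoked, and is the portion demanding the most care.
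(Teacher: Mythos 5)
Your plan coincides with the paper's proof in Appendices~\ref{section:alg} and~\ref{app:alg}: the same epoch/timestamp/hashing scaffolding, $\prep$ at round $\gamma j$, $\fix$ at round $\gamma j+4$ with excerpt-only messages, the identical amortization ($\incorrect(i)\le\gamma\cdot\epochs(i)$ and $\epochs(i)\le 2\cdot\changes(i)$ via the $\gamma n$ timestamp-gap claim), and the same distance-at-most-$3$/tainting concurrency argument in the correctness induction. The one substantive issue is your inductive invariant: you state only that clean--clean edges are edge-correct and that all-clean stars are consistent, but the induction cannot close from that alone, because both $\prep$ and $\fix$ carry preparedness \emph{preconditions} --- $\prep(v,N_v^{old},N_v^{new},L_v^{old})$ is only guaranteed to output an $N_v^{new}$-prepared label when $L_v^{old}$ is $N_v^{old}$-prepared (it is not defined on ``arbitrary, possibly inconsistent'' labels, as you write), and $\fix$ requires the active node's label and \emph{every} neighbor's label, including dirty neighbors', to be prepared. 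The paper therefore carries a stronger first invariant item: at the end of every epoch, \emph{every} node's label $\hat{L}^{\gamma j+4}_v$ is $N^{\gamma j}_v$-prepared, maintained via item~(a) of the $\fix$ definition for active nodes, item~(b2) for their neighbors, and the $\prep$ contract for everyone else. With that clause added, your argument is the paper's.
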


While the definition of LFLs is unavoidably involved, our algorithm has the desired property of being simple, which we consider to be a benefit.

Theorem~\ref{theorem:alg} handles changes in edges, and a direct translation for handling node insertions/deletions (along with their edges) incurs an undesired blow-up in the complexity.
However, we prove that for some LFLs the same approach can also handle node insertions and deletions within the same complexity, because of additional properties that their LFLs satisfy.

\section{Locally-Fixable Labelings (LFLs)}
\label{sec:lfl}
The key observation of this paper is that we can pinpoint some LCLs that can be fixed fast.
The new concept of \emph{Locally-Fixable Labelings (LFLs)} is the key to our approach, and is a combinatorial definition in the spirit of LCLs.
The definition of LFL is quite involved, so we break it up into several parts. Here, we give a first glimpse into the structure of an LFL.
\begin{tcolorbox}
	\textbf{Locally-Fixable Labelings (LFLs), the main structure:} An LFL $\mathcal{L}$ is a tuple $(\Sigma, \Gamma, \mathcal{E}, \mathcal{P}, \prep, \fix)$, where $\Sigma$ is a set of input labels, $\Gamma$ is a set of output labels, and $\mathcal{E}, \mathcal{P}, \prep, \fix$ are defined below.
\end{tcolorbox}

We begin by imposing an inner structure for each output label in $\Gamma$, in a way which captures the part of the label that addresses the node itself, and the parts that address each possible edge incident on it. In what follows, a label typically refers to an output label, i.e., an element in $\Gamma$. Formally, we denote by $V=[n]$ the set of all possible nodes. A label $L_v$ of a node $v\in V$ is a vector indexed by the graph nodes, denoted
$L_v=(L_{v1},\ldots,L_{vv},\ldots,L_{vn})$.
The entry $L_{vv}$ is called the \emph{node excerpt}, and holds information about the node $v$.
Each entry of the form $L_{vu}$, for $u\neq v$, is called an \emph{edge excerpt}, and holds information about the pair $\{v,u\}$. Naturally, for non-neighboring nodes, the content of their edge excerpts is quite useless, and is therefore either chosen to be empty or to be some default value.

\subsection{Consistency}
\label{subsec:consistency}
We now focus on how to capture local consistency. Namely, we define the sets $\mathcal{E}$ and $\mathcal{P}$, which replace the set $\mathcal{C}$ in the definition of LCLs.

In essence, we want an LFL to be a type of LCL, and so we need that for each node, the label of itself and its neighbors determine whether the labeling is locally consistent. However, algorithmically, this requires that each node receives the labels of its neighbors, which in general may be too large to fit in a single message. Instead, we require that for each node $v$, local consistency of the labeling can be determined based only on its own label $L_v$, and the node excerpts of its neighbors, namely $\{L_{uu}\mid u\in N_v\}$. Algorithmically, when each node excerpt fits in a single message, this information can be obtained by $v$ in a single round.

We denote by $X$ the set of all possible excerpts of labels in $\Gamma$, i.e., $\Gamma\subseteq X^n$.
We denote by $\mathcal{N}_v$ the set of possible centered stars of node $v\in [n]$, and $\mathcal{N}=\bigcup_{v\in [n]}{\mathcal{N}_v}$.
An element of $\Gamma\times X^{d}$ for any value of $d\geq 0$ is called a \emph{view}. For each $v\in [n]$ and $N_v \in \mathcal{N}_v$ of size $d(v)+1$, we define below a set of views in $\Gamma\times X^{d(v)}$ which are called \emph{$N_v$-consistent} views. Intuitively, these are views whose excerpts correspond to a locally consistent labeling of $N_v$.
In a nutshell, $N_v$-consistent views are views for which all edge excerpts for edges touching $v$ comply with some correctness condition, and the label of $v$ itself follows another correctness rule. We derive the formal definition of $N_v$-consistent views through the two definitions: the set $\mathcal{E}$ of \emph{edge-correct tuples} and the set $\mathcal{P}$ of \emph{prepared labels} (i.e., prepared to be fixed).

~\\\textbf{Defining $\mathcal{P}$:} For each $v\in [n]$ and $N_v \in \mathcal{N}_v$, an LFL determines a set $\mathcal{P}$ of centered stars with labeled centers. Intuitively, these labels are prepared, in the sense that they can later be fixed if needed. We call the label of the center $v$ of each element in $\mathcal{P}$ an $N_v$-prepared label. Note that, algorithmically, no communication is needed for a node $v$ to check whether its label $L_v$ is $N_v$-prepared for its current neighborhood $N_v$, or to choose such a label.

~\\\textbf{Defining $\mathcal{E}$:} 
An LFL determines a set $\mathcal{E}$, whose elements are unordered pairs of nodes $\{v,u\}$ along with tuples $(L_{vv},L_{vu},L_{uv},L_{uu})$ in $X^4$, which we call edge-correct tuples. For each $\{v,u\}$-correct tuple, it must hold that $L_{vu}=L_{uv}$. We call this latter condition the \emph{reciprocity property}. Intuitively, an element in $\mathcal{E}$ corresponds to two labels that agree in their reciprocal edge excerpts and satisfy some required consistency.

~\\\textbf{$N_v$-consistency:}  For each $v\in [n]$ and $N_v \in \mathcal{N}_v$ of size $d(v)+1$, a view $(L_v,L_{u_1u_1},\dots,L_{u_{d(v)}u_{d(v)}})$ in $\Gamma\times X^{d(v)}$, for which $L_v$ is $N_v$-prepared and for every $u_i \in N_v$ the tuple $(L_{vv},L_{vu_i},L_{u_iv},L_{u_iu_i})$ is $\{v,u_i\}$-correct is called an \emph{$N_v$-consistent} view.
We can now consider a set $\mathcal{VC}$ (which stands for view consistency), which consists of all stars centered at each node $v$ along with a label for $v$, and a node excerpt and a $\{u,v\}$ edge excerpt for each neighbor of $v$, such that the obtained view is $N_v$-consistent. We do not include $\mathcal{VC}$ as a separate element in the definition of LFLs because this set is completely determined by the sets $\mathcal{P}$ and $\mathcal{E}$.

However, the set $\mathcal{VC}$ is useful for defining the analog to LFLs of an $\mathcal{L}$-legal labeling. For an LFL $\mathcal{L}$, a labeling $\lambda: V \rightarrow \Sigma \times \Gamma$ is called $\mathcal{L}$-legal for a graph $G=(V,E)$, if for every $v\in V$, there exists in $\mathcal{VC}$ an element $(H, s)$ with a label at the center and two excerpts at each other node, and there exists a mapping $\pi$ that maps the star centered at $v$ in $G$ into $(H,s)$ with $\pi(v)=s$, such that the label in $\lambda$ of $v$ is the same as the label of the node $s$ in $(H,s)$, and the respective excerpts of each neighbor $w$ of $v$ are the same as those of $\pi(w)$. If input labels exist, then they have to correspond in the mapping as well.

Crucially, if a view is $N_v$-consistent, then it is $N_v$-consistent regardless of the content of any other excerpts of the nodes $u_i \in N_v$ and any labels of other nodes. This implies that LFLs are a subclass of LCLs. To prove this, given an LFL we simply define a set $\mathcal{C}$ which contains each star $N_v$ centered at $v$, with node labels such that the view $(L_v,L_{u_1u_1},\dots,L_{u_du_d})$ is $N_v$-consistent, where the nodes $u_i$ are the neighbors of $v$. Clearly, if a labeling is $\mathcal{L}$-legal for the LFL, then it is also legal for the obtained LCL, which completes the argument. Notice that the other equivalence also holds: every LCL can be defined by replacing  $\mathcal{C}$ with $\mathcal{E}$ and $\mathcal{P}$, by plugging every label used when defining $\mathcal{C}$ into all excerpts of a new label which is then used for defining $\mathcal{E}$ and $\mathcal{P}$ accordingly. However, not every LCL is an LFL, because one cannot always find functions $\prep$ and $\fix$ as we require in what follows.

\subsection{Preparedness (defining $\prep$)}
\label{subsec:prepared}
Algorithmically, to fix consistency for a node $v$ after a topology change, we will need $v$ to fix also the labels of its neighbors. The node $v$ cannot guarantee consistency for a neighbor $u$ if it did not hold before the fix, but we still must make sure that a new label that is assigned by $v$ to a neighbor $u$ is always prepared, even if it is not yet consistent with respect to the labels of the neighbors of $u$ (which is information that $v$ does not have). That is, we need that the labels of all nodes are always prepared, even if a topology change occurs. This implies that even before any communication takes place for fixing a topology change, each node can prepare its label for fixing (make it prepared).

The combinatorial definition we use to capture the above is the following.
A function $\prep : [n]\times \mathcal{N}\times\mathcal{N}\times \Gamma \rightarrow \Gamma$ is a \emph{preparing function} if
for each node $v$,
if $N_v^{old},N_v^{new}\in\mathcal{N}_v$ and $L_v^{old}$ is $N_v^{old}$-prepared, then $L_v^{new}= \prep(v,N_v^{old},N_v^{new},L_v^{old})$ is $N_v^{new}$-prepared, and $L_{vu}^{old}=L_{vu}^{new}$ for every $u \not\in N_v^{old}\oplus N_v^{new}$.

Algorithmic intuition: The function $\prep$ allows node $v$ to change its own node excerpt $L_{vv}^{old}$, and any edge excerpt $L_{vu}^{old}$ of a node $u$ that is also affected by the topology change.
For such a node $u$, this excerpt has to change in a consistent manner, so that $u$ also changes the excerpt $L_{uv}^{old}$ of its old label in a way that satisfies the reciprocity property in the definition of an edge-correct tuple. For a node $u \notin N_v^{old}\oplus N_v^{new}$, the excerpt $L_{uv}$ is not allowed to be changed, as $u$ is unaware of the topology change and we want this function to be applied by $v$ without communication, while preserving the reciprocity property on all edges.

\subsection{Fixability (defining $\fix$)}
\label{subsec:fixability}
We are now ready for the most crucial definition, which is the one that shows that views are enough for \emph{fixing} consistency, given that they are of prepared labels. Algorithmically, when a node $v$ needs to fix its label for consistency, it is given its old label and the node excerpt $L_{uu}$ of every neighbor $u$ for a prepared $L_u$, and $v$ computes a new label for itself and possibly new node excerpts for its neighbors.
Crucially, although $v$ does not have information other than the old node excerpts, the new node excerpts must not compromise the preparedness of the labels from which they originate and they must not change edge-correctness for edges of neighboring nodes.
Hence, if before the fix of a node $v$ it holds that the only incorrect edge of its neighbor $u$ is its edge with $v$, then the new excerpts that $v$ assigns to $u$ must make it $N_u$-consistent for its neighborhood.

Combinatorially, we define the following. A function $\fix: [n]\times \mathcal{N}\times \Gamma \times X^{d} \rightarrow \Gamma\times X^{d}$ is called a \emph{fixing function} if whenever the following hold
\begin{itemize}
	\item $v \in [n]$, $N_v^{new}\in\mathcal{N}_v$, $d=|N_v^{new}|$ and $N_v^{new}=\{u_1,\dots, u_{d}\}$,
	\item $L_v^{old}$ is $N_v^{new}$-prepared,
	\item $G^{new}$ is a graph that is consistent with $N_v^{new}$, with a labeling for its nodes such that for each $u_i$, its label $L_{u_i}$ is $N_{u_i}$-prepared, where $N_{u_i}$ is the neighborhood of $u_i$ in $G^{new}$, and
	\item the labels $L_v^{old},L_{u_1},\dots,L_{u_d}$ satisfy the reciprocity property for every pair $\{v,u_i\}$,
\end{itemize}
then, when denoting
\[(L_v^{new}, \beta_1, \dots, \beta_d)=
\fix(v,N_v^{new},L_v^{old}, L_{u_1u_1},\dots, L_{u_du_d}),\]
it holds that
\begin{itemize}
	\item[(a)]{(preparedness)} $L_v^{new}$ is $N_v^{new}$-prepared, and
	\item[(b)]  for every $u_i$ denoting by $L'_{u_i}$ the label for which $L'_{u_iw}=L_{u_iw}$ for every $w\in[n]\setminus\{u_i,v\}$, $L'_{u_iu_i}=\beta_i$ and
	$L'_{u_iv}=L_{vu_i}^{new}$,
	it holds that for every $1\leq i\leq d$:
	\begin{enumerate}
		\item{(edge-correctness)} the tuple $(L_{vv}^{new},L_{vu_i}^{new},L'_{u_i v},L'_{u_iu_i})$ is $\{v,u_i\}$-correct,
		\item{(preparedness for neighbor)} $L'_{u_i}$ is $N_{u_i}$-prepared, and
		\item{(edge-correctness for neighbor)} for every $w \in N_{u_i}$, if
		$(L_{u_iu_i},L_{u_iw},L_{w u_i},L_{ww})$ is $\{u_i,w\}$-correct, then
		$(L'_{u_iu_i},L'_{u_iw},L_{w u_i},L_{ww})$ is
		$\{u_i,w\}$-correct, where $L_w$ is the label of $w$ in the given labeling for $G^{new}$.
	\end{enumerate}
\end{itemize}
Notice the crucial property that follows from the definition: The new view $(L_v^{new},L'_{u_1u_1},\dots,L'_{u_du_d})$ is $N_v^{new}$-consistent. This holds because item (a) in the definition is promised to hold, as well as item (b1) for all neighbors $u_i \in N_v^{new}$.

\subsection{The full LFL definition}
\label{subsec:lfl}
We can now finally define locally fixable labelings, as follows.
\begin{tcolorbox}
	\textbf{Locally-Fixable Labelings (LFLs):} An LFL $\mathcal{L}$ is a tuple $(\Sigma, \Gamma, \mathcal{E}, \mathcal{P}, \prep, \fix)$, where $\Sigma$ is a set of input labels, $\Gamma$ is a set of output labels, $\mathcal{E}$ is a set of edge-correct tuples, $\mathcal{P}$ is a set of centered stars with prepared labels at their centers, the function $\prep$ is a preparing function and the function $\fix$ is a fixing function.
\end{tcolorbox}

\section{An $O(1)$ amortized dynamic algorithm for edge insertions/deletions}
\label{section:alg}

Our main result is that an LFL can be fixed deterministically within a constant amortized number of rounds. For bounded LFLs, i.e., when the excerpt sizes are bounded by $O(\log{n})$ bits, our algorithm also works with a bandwidth of $O(\log{n})$ bits. Here, we present  our main algorithm, and in Appendix~\ref{sec:alg-node} we prove that the same algorithm also works for node insertions, given an additional property of the $\prep$ function of the LFL, which is satisfied in some of our examples.

\subsection{Edge insertions/deletions}
\label{sec:alg-edge}

\begin{theorem-repeat}{theorem:alg}
	\AlgTheorem
\end{theorem-repeat}

\begin{proof}
	First, we assume that all nodes start with initial labels that are globally consistent.
	
	\medskip
	\noindent\textbf{The setup:}
	We denote $\gamma=5$.
	Let $F_i$ be a set of edge changes (insertions/deletions) that occur in round $i \geq 0$ (for convenience, the first round is round 0). With each change in $F_i$, we associate two \emph{timestamps} such that a total order is induced over the timestamps as follows: for an edge $e=\{u,v\}$ in $F_i$, we associate the timestamp $ts=(i,u,v)$ with node $u$, and the timestamp $(i,v,u)$ with node $v$. Since $u$ and $v$ start round $i$ with an indication of $e$ being in $F_i$, both can deduce their timestamps at the beginning of round $i$. We say that a node $v$ is the \emph{owner} of the timestamps that are associated with it. In each round, a node only stores the largest timestamp that it owns, and omits the rest.
	
	Notice that timestamps are of unbounded size, which renders them impossible to fit in a single message. To overcome this issue we borrow a technique of ~\cite{AttiyaDS89}, and we invoke a deterministic hash function $H$ over the timestamps, which reduces their size to $O(\log{n})$ bits, while retaining the total order over timestamps. The reason we can do this is that not every two timestamps can exist in the system concurrently. To this end, we define $h(i)=i \mod 3\gamma n$ and
	$H(ts)=(h(i), u,v)$ for a timestamp $ts=(i,u,v)$, and we define an order $\prec_H$ over hashed timestamps as the lexicographic order of the 3-tuple, induced by the following order $\prec_h$ over values of $h$. We say that $h(i)\prec_h h(i')$ if and only if one of the following holds:
	\begin{itemize}
		\item $0 \leq h(i) < h(i') \leq 2\gamma n$, or
		\item $\gamma n \leq h(i) < h(i') \leq 3\gamma n$, or
		\item $2\gamma n \leq h(i) < 3\gamma n$ and $0 \leq h(i') < \gamma n$.
	\end{itemize}
	If two timestamps $ts=(i,v,u)$, $ts'=(i',v',u')$ are stored in two nodes $v,v'$ at two times $i,i'$, respectively, it holds that $ts<ts'$ if and only if  $H(ts) \prec_H H(ts')$.
	The reason that this holds despite the wrap-around of hashed timestamps in the third bullet above, is the following property that we will later prove: for every two such timestamps, it holds that $i'-i \leq \gamma n$. This implies $h(i) \prec_h h(i')$ whenever $i < i'$ despite the bounded range of the function $h$.
	
	In the algorithm, the nodes chop up time into \emph{epochs}, each consists of $\gamma$ consecutive rounds, in a non overlapping manner. That is, epoch $j$ consists of rounds $i=\gamma j,\dots,\gamma (j+1)-1$.
	
	\medskip\noindent\textbf{The algorithm:}
	For every epoch $j \geq 0$, we consider a set $D_j \subseteq V$ of \emph{dirty} nodes at the beginning of each epoch, where initially no node is dirty ($D_0=\emptyset$).
	Some nodes in $D_j$ may become \emph{clean} by the end of the epoch, so at the end of the epoch the set of dirty nodes is denoted by $D'_j$, and it holds that $D'_j \subseteq D_j$. At the beginning of epoch $j+1$, all nodes that receive any indication of an edge in $F_i$ in the previous epoch are added to the set of dirty nodes, i.e., $D_{j+1} = D'_{j} \cup I_j$, where $I_j$ is the set of nodes that start round $i$ with any indication about $F_i$, for any $\gamma j \leq i \leq \gamma (j+1)-1$.
	
	Intuitively, the algorithm changes the labels so that \emph{the labels at the \textbf{end} of the epoch are consistent with respect to the topology that was at the \textbf{beginning} of the epoch}, unless they are labels of dirty nodes or of neighbors of dirty nodes.
	
	The algorithm works as follows. In epoch $j=0$, the nodes do not send any messages, but some of them enter $I_0$ (if they receive indications of edges in $F_i$, for $0 \leq i \leq \gamma-1$).

	Denote by $N_{v}^i$ the neighborhood of $v$ in round $i$, denote by $L^i$ the labeling at the beginning of round $i$, before the communication takes place, and denote by $\hat{L}^i_v$ the labeling at the end of the round. Unless stated otherwise, the node $v$ sets $\hat{L}^i_v \leftarrow L^i_v$ and $L^{i+1}_v \leftarrow \hat{L}^i_v$.
	Now, consider an epoch $j>0$. On round $\gamma j$ every node $v\in D_j$ locally applies
	$L_v^{\gamma j}\leftarrow \prep(v,N^{\gamma (j-1)}_v,N^{\gamma j}_v,\hat{L}_v^{\gamma j-1})$,
	where $\hat{L}_v^{\gamma j-1}$ is the label that $v$ has at the \emph{end} of round $\gamma j-1=\gamma (j-1)+4$, which, as we describe below, may be different from its label $L_v^{\gamma j-1}$ at the beginning of the round.\footnote{We stress that one can describe our algorithm with labels that can only change at the beginning of a round, but we find the exposition clearer this way.}
	Then, the node $v$ sends $L_{vv}^{\gamma j}$ to its neighbors. These are the labels for the graph $G_{\gamma j}$ which the fixing addresses, and this is how the algorithm leverages the preparing function $\prep$ --- by bringing all the labels to the common ground of being prepared for the neighborhoods in the same graph, $G_{\gamma j}$.
	
	On rounds $\gamma j+1$ to $\gamma j+3$ the nodes propagate the hashed timestamps owned by dirty nodes. That is, on round $\gamma j+1$, each node in $D_j$ broadcasts its hashed timestamp, and on the following two rounds all nodes broadcast the smallest hashed timestamp that they see (with respect to the order $\prec_H$). Every node $v$ in $D_j$ which does not receive a hashed timestamp that is smaller than its own becomes \emph{active}.
	
	On the last round of the epoch, $\gamma j+4$, every active node $v$ that has neighborhood $N^{\gamma j}_v=\{u_1,\ldots,u_d\}$ at round $\gamma j$ computes
	$(\ell_{v},\beta_1,\dots,\beta_d)\leftarrow\fix(v,N^{\gamma j}_v,L^{\gamma j}_v,L_{u_1u_1}^{\gamma j},\dots,L_{u_du_d}^{\gamma j})$.
	Notice that $v$ has the require information to compute the above, even if additional topology changes occur during the rounds in which timestamps are propagated. Yet, we need to cope with the fact that topology changes may occur also throughout the current epoch and, for example, make active nodes suddenly become too close. For this, we denote by $T_j\subseteq I_j$ the set of \emph{tainted} nodes who received an indication of a topological change for at least one of their edges during the epoch $j$.
	
	Now, only if $v\notin T_j$ is an active node, it sends each neighbor $u$ the values $\ell_{vu}$ (the corresponding excerpt of $\ell_v$) and $\beta_u$ and becomes clean. Otherwise, an active node $v$ that is tainted (i.e., is in $T_j$) aborts and remains dirty for the next epoch.
	Of course, if two nodes $u$ and $v$ are neighbors at the beginning of an epoch but not when $v$ sends the computed excerpts, then $u$ does not receive this information.
	
	Finally, every active node $v\notin T_j$ updates $\hat{L}^{\gamma j+4}_v\leftarrow \ell_v$ and each neighbor $u$ updates $\hat{L}^{\gamma j +4}_u$ by setting $\hat{L}_{uu}^{\gamma j+4}\leftarrow\beta_u$ and $\hat{L}_{uv}^{\gamma j+4}\leftarrow\ell_{vu}$,
	leaving other excerpts unchanged. At the end of round $\gamma j+4=\gamma (j+1)-1$, node $v$ becomes \emph{inactive} (even if it aborts) and is not included in $D'_j$, i.e., we initialize $D'_j=D_j\setminus \{v\mid v \text{ is active in epoch } j\}$ at the beginning of epoch $j$. Notice that if $v$ is active but aborts then it will be in $I_j$ and hence in $D_{j+1}$.
	
	\noindent\textbf{Analysis:} In Appendix~\ref{app:alg}, we prove that the amortized round complexity is $O(1)$. In addition, we prove that the following invariant holds at the end of round $i=\gamma j+4=\gamma(j+1)-1$:
	\begin{enumerate}
		\item For every node $v$, its label $\hat{L}^{\gamma j+4}_v$ is $N^{\gamma j}_v$-prepared;
		\item For every two nodes $u,v$ that are clean at the end of the epoch and for which $\{u,v\} \in G_{\gamma j}$, it holds that the tuple $(\hat{L}_{uu}^{\gamma j+4},\hat{L}_{uv}^{\gamma j+4},\hat{L}_{vu}^{\gamma j+4}, \hat{L}_{vv}^{\gamma j+4})$ is $\{u,v\}$-correct.
	\end{enumerate}
	Since the invariants hold, we conclude that whenever $D_j=\emptyset$, it holds that for each edge $e$, the tuple associated with $e$ is $e$-correct, and all views are consistent for their respective neighborhoods in $G_{\gamma j}$, which by the correctness property implies that the labeling is in $\mathcal{L}$-legal. Further, what the invariants imply is that some correctness holds even for intermediate rounds: at the end of every epoch $j$, every tuple associated with an edge $e$ that touches two clean endpoints is $e$-correct, and the entire subgraph induced by nodes that are clean and have all of their neighborhood clean has locally consistent labelings.
\end{proof}

\section{Analysis of the Algorithm of Appendix~\ref{section:alg}}
\label{app:alg}
We provide here the missing details of the analysis of the algorithm (proof of Theorem~\ref{theorem:alg}).

~\\\textbf{Round complexity:}
We now prove that the algorithm has an amortized round complexity of $O(1)$. To this end, we show that any epoch $j$ in which messages are sent can be \emph{blamed} on a different timestamp $ts$ and that the node $v$ that owns $ts$ is either clean for the next epoch ($v \not\in D_{j+1}$) or is dirty because of a (new) change that occurs in one of its adjacent edges during epoch $j$ ($v\in I_j$).

First, we claim that for every two timestamps $ts=(i,v,u)$ and $ts'=(i',v',u')$ such that $ts<ts'$, that are simultaneously owned by nodes at a given time, it holds that $i'-i \leq \gamma n$. Assume otherwise, and consider the first time when this condition is violated for some timestamps $ts<ts'$. This means that the owner $v$ of $ts$ does not become active for more than $n$ epochs. Since in each epoch at least one timestamp is handled, $v$ not becoming active for more than $n$ epochs can only happen if at round $i$ there were more than $n$ timestamps which were then not yet handled, stored in various nodes. But there are at most $n$ nodes and each one stores at most one timestamps so the above is impossible. Since $i'-i \leq \gamma n$, we have that $H(ts) \prec_H H(ts')$, because $h(i) \prec_h h(i')$, as argued earlier.

Since the hashed timestamps are totally ordered by $\prec_H$, we have that in each epoch $j$ there is at least one dirty node $v$ that becomes active, namely the one with the minimal timestamp. The node $v$ is not in $D'_j$ and hence either is not in $D_{j+1}$ or is in $I_j$, as claimed. Since every topology change results in two timestamps, we have that the number of rounds required by the algorithm is $2\gamma=O(1)$, amortized over all changes.

~\\\textbf{Correctness:}
For correctness we claim the following invariant holds at the end of round $i=\gamma j+4=\gamma(j+1)-1$:

\begin{enumerate}
	\item For every node $v$, its label $\hat{L}^{\gamma j+4}_v$ is $N^{\gamma j}_v$-prepared;
	\item For every two nodes $u,v$ that are clean at the end of the epoch and for which $\{u,v\} \in G_{\gamma j}$, it holds that the tuple $(\hat{L}_{uu}^{\gamma j+4},\hat{L}_{uv}^{\gamma j+4},\hat{L}_{vu}^{\gamma j+4}, \hat{L}_{vv}^{\gamma j+4})$ is $\{u,v\}$-correct.
\end{enumerate}
We prove the above by induction on the epochs. Clearly the base case holds trivially as during the first epoch the labels do not change and we assume that the nodes start with an $\mathcal{L}$-legal labeling for the initial graph $G_0$. Now, assume the above invariants hold for epoch $j-1$. We analyze what happens for each item.
\begin{enumerate}
	\item We claim that at the end of the epoch, every node $v$ has a $N^{\gamma j}_v$-prepared label $\hat{L}^{\gamma j+4}_v$. By the induction hypothesis, $\hat{L}^{\gamma (j-1)+4}_v$ is a $N^{\gamma (j-1)}_v$-prepared labeling for $v$. Since we apply $\prep$ with $L_v^{old}=\hat{L}^{\gamma (j-1)+4}_v$ and $N^{old}=N^{\gamma (j-1)}_v$, preparedness holds by definition of $\prep$ for $L^{\gamma j}_v$.
	Now, if the label of $v$ does not change further during the epoch, then the invariant holds for it.
	
	If the label of $v$ changes because $v$ is an active node that does not abort, then it applies $\fix$ in order to obtain $\hat{L}^{\gamma j+ 4}_v$, which is $N^{\gamma j}_v$-prepared by item~(a) in the definition of $\fix$.
	
	Otherwise, we claim that the label of $v$ can only be changed by a single one of its neighbors. This is because if two of its neighbors, $u_1,u_2$ are active, then one of its edges to them must be inserted during the epoch because we propagate the timestamp to distance 3, which makes its endpoint abort. Since the label of $v$ is changed by a single neighbor that executes $\fix$, the item~(b2) in the definition of $\fix$ guarantees that the new label for $v$ is prepared with respect to its neighborhood in the respective graph, which is $G_{\gamma j}$.
	
	\item For every two nodes $u,v$ that are clean at the end of the epoch and for which $\{u,v\} \in G_{\gamma j}$, if their labels do not change during the epoch, then $\{u,v\}$-correctness of the respective tuple follows from the induction hypothesis.
	
	If only one of their labels changes, say that of $v$, then either $v$ applies $\fix$ or there is a (single) neighbor $w$ of $v$ which changes the label of $v$ by applying $\fix$. In the former case, by item~(b1) in the definition of $\fix$ it holds that the respective tuple of excerpts of the labels of $v$ and $u$ is $\{u,v\}$-correct. In the latter, $\{u,v\}$-correctness is given by item~(b3) in the definition of $\fix$.
	
	Finally, if both of their labels change, then either without loss of generality $v$ applies $\fix$ to both labels, in which case item~(b1) in the definition of $\fix$ promises that $\{u,v\}$-correctness holds, or $v$ and $u$ have a joint neighbor $w$ which applies $\fix$ and again $\{u,v\}$-correctness holds, by item~(b3). The crucial thing to notice here is that it cannot be the case that a node $w_v$ changes the label of $v$ and a different node $w_u$ changes the label of $u$ at the same time, because this implies that the distance between $w_v$ and $w_u$ is at most 3, in which case either at least one of them aborts due to an edge insertion, or the edge $\{u,v\}$ is inserted (maybe immediately after being deleted), but then $v$ and $u$ are not clean.
\end{enumerate}

Since the invariants hold, we conclude that whenever $D_j=\emptyset$, it holds that for each edge $e$, the tuple associated with $e$ is $e$-correct, and all views are consistent for their respective neighborhoods in $G_{\gamma j}$, which by the correctness property implies that the labeling is in $\mathcal{L}$-legal. Further, what the invariants imply is that some correctness holds even for intermediate rounds: at the end of every epoch $j$, every tuple associated with an edge $e$ that touches two clean endpoints is $e$-correct, and the entire subgraph induced by nodes that are clean and have all of their neighborhood clean has locally consistent labelings.

\subsection{Node insertions/deletions}
\label{sec:alg-node}

For node insertions and deletions, the proof of Theorem~\ref{theorem:alg} does not hold. Intuitively, this is because we need all neighbors of a changed node (inserted or deleted) to become dirty, which clearly increases the amortized complexity.

For the example of maximal matching, however, notice that when an edge is inserted, it suffices that \emph{only one} of its endpoints becomes dirty in the algorithm and fixes correctness for the tuple associated with that edge. Hence, if a node is inserted, it suffices that the inserted node becomes dirty, and we do not need all of its neighbors to become so. The property of the LFL for maximal matching which makes this possible is a property of the preparing function $\prep$, which makes it what we term as an \emph{insertion-closed preparing function}.

~\\\textbf{Insertion-closed preparing functions:} Formally, a preparing function $\prep$ is called \emph{insertion-closed} if for each $v\in[n]$, whenever $N_v^{old}\subseteq N_v^{new}$ it holds that $\prep(v,N_v^{old},N_v^{new},L_v^{old})=L_v^{old}$.

We claim that for every bounded LFL with an insertion-closed preparing function, our approach also handles node insertions within $O(1)$ amortized rounds.

\newcommand{\AlgTheoremNodes}
{
	For every bounded LFL  $\mathcal{L}$ with an insertion-closed preparing function, there is a deterministic dynamic distributed fixing algorithm which handles edge insertions/deletions and node insertions in $O(1)$ amortized rounds.
}
\begin{theorem}
	\label{theorem:alg-nodes}
	\AlgTheoremNodes
\end{theorem}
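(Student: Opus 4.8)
The plan is to run the algorithm of Theorem~\ref{theorem:alg} essentially unchanged and to exploit the insertion-closed hypothesis to let the neighbors of a freshly inserted node stay completely passive. I would model the insertion of a node $v$ (together with its edges to existing nodes $u_1,\dots,u_d$) as a single atomic change that contributes exactly one timestamp, say $(i,v,v)$, and that makes only $v$ dirty. An inserted node is assumed to arrive with a default label that is $\emptyset$-prepared and whose edge excerpts all hold the symmetric default (empty) value, matching the empty-graph initialization; this is exactly the starting label used for the concrete LFLs (cf.\ Corollary~\ref{cor:apx}). The one bookkeeping change is to decouple the two roles of the indication set $I_j$ of Appendix~\ref{app:alg}: for \emph{tainting} I keep the rule unchanged, so a neighbor of $v$ that happens to be fixing in the same epoch aborts exactly as it would for an ordinary edge insertion; but for deciding which nodes are \emph{carried over as dirty}, a node insertion contributes only $v$, never its neighbors.

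For the amortized complexity, the only quantity that changes relative to Appendix~\ref{app:alg} is the number of timestamps generated per topology change. Since a node insertion now spawns a single timestamp (for $v$) rather than $\Theta(d)$, every edge or node change still produces $O(1)$ timestamps, so through round $i$ there are $O(\changes(i))$ of them. The blaming argument is then verbatim: in each epoch in which messages are sent at least one timestamp is handled, giving $\epochs(i)\le 2\cdot\changes(i)$ and hence $O(1)$ amortized rounds; the hashing and the $O(\log n)$ bound on timestamp size are untouched.

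The crux is to re-establish the two invariants of Appendix~\ref{app:alg}, and this is precisely where insertion-closedness is used --- in fact twice. First, for each neighbor $u_i$ of an inserted $v$ we have $N_{u_i}^{old}\subseteq N_{u_i}^{new}$, so insertion-closedness gives $\prep(u_i,N_{u_i}^{old},N_{u_i}^{new},L_{u_i}^{old})=L_{u_i}^{old}$; combined with the defining property of a preparing function this shows the \emph{unchanged} label $L_{u_i}^{old}$ is already $N_{u_i}^{new}$-prepared. This is what lets a neighbor keep Invariant~1 across epochs even though it never becomes dirty and never re-prepares, and it leaves its edge excerpt toward $v$ at the default value. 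The same reasoning applied to $v$ itself (with $N_v^{old}=\emptyset\subseteq N_v^{new}$ and an $\emptyset$-prepared default label) shows $v$'s label is $N_v^{new}$-prepared and its excerpts toward the $u_i$ stay at the default value. Hence on every new edge $\{v,u_i\}$ both reciprocal excerpts equal the symmetric default, so the reciprocity property holds and the preconditions of $\fix$ are met once $v$ becomes active. When $v$ then fixes, Invariant~1 for $v$ and its neighbors follows from items~(a) and~(b2) of the $\fix$ definition, edge-correctness of each $\{v,u_i\}$ from item~(b1), and preservation of $u_i$'s previously correct edges from item~(b3); the cases for edges untouched by any insertion are identical to Appendix~\ref{app:alg}.

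I expect the main obstacle to be exactly this bookkeeping around the passive neighbors: under the unmodified rule they would receive edge-insertion indications, become dirty, and spawn $\Theta(d)$ timestamps, destroying the amortization, while under the modified rule one must check that leaving them passive never breaks preparedness or reciprocity. The insertion-closed hypothesis is tailored to discharge both worries --- it guarantees that a grown neighborhood never invalidates an already-prepared label, and that no preparation (hence no excerpt change) is needed on the neighbors' side, so that a single later fix by $v$ suffices to make all the new edges correct without ever paying for the neighbors.
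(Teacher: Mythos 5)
Your proposal is correct and follows essentially the same route as the paper's proof: only the inserted node becomes dirty, and the insertion-closed property of $\prep$ is used to argue that the neighbors' unchanged labels remain prepared and that previously edge-correct tuples stay correct until the inserted node becomes active and invokes $\fix$. Your additional bookkeeping (a single timestamp per node insertion, the explicit check of reciprocity on the new edges before $\fix$ is applied, and the separation of the tainting rule from the dirtying rule) only spells out details the paper leaves implicit.
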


\begin{proof}
	When a node is inserted, it adopts an initial label that is $N_v$-prepared.
	The algorithm is almost exactly the same as the main algorithm, with the only modification being that a node $v$ becomes dirty also in the case when it is an inserted node.
	The analysis of $O(1)$ amortized number of rounds remains the same, and so we only argue that the two claimed correctness conditions hold at the end of round $\gamma j + 4 = \gamma (j+1) -1$:
	\begin{enumerate}
		\item For every node $v$, its label $\hat{L}^{\gamma j+4}_v$ is $N^{\gamma j}_v$-prepared;
		\item For every two nodes $u,v$ that are clean at the end of the epoch and for which $\{u,v\} \in G_{\gamma j}$, it holds that the tuple $(\hat{L}_{uu}^{\gamma j+4},\hat{L}_{uv}^{\gamma j+4},\hat{L}_{vu}^{\gamma j+4}, \hat{L}_{vv}^{\gamma j+4})$ is $\{u,v\}$-correct.
	\end{enumerate}
	The proof is again by induction and below we only indicate the modifications required compared with the proof of Theorem~\ref{theorem:alg}.
	\begin{enumerate}
		\item The proof for $N^{\gamma j}_v$-preparedness remains the same.
		\item For the proof of $\{u,v\}$-correctness to go through, we argue that $\{u,v\}$-correctness still holds after the insertion of a node $w$ that is a neighbor of at least one of $u$ and $v$, even before the dirty node $w$ becomes active. This is because the insertion-closed preparing function promises that the labels of $u$ and $v$ remain prepared after the insertion of $w$ without the need to change them, and hence the correctness of the respective tuple for the edge $\{u,v\}$ is maintained. \qedhere
	\end{enumerate}
\end{proof}

At this point, a natural question is why does our algorithm not handle also node deletions. The intuition for why this happens is as follows. By the definition of an insertion-closed preparing function, for an edge insertion it suffices that only one endpoint of the edge invokes $\fix$. This is why we can extended the algorithm to also apply to node insertions, as the inserted node is responsible for fixing the correctness of all of its edges. However, for an edge deletion, it may be the case that both endpoints need to invoke $\fix$ (imagine the maximal matching example, when a matching edge is removed from the graph, and so both endpoints need to match themselves with new neighbors). Hence, while for an edge deletion this only incurs a factor of 2 in the amortized complexity, when a node is deleted it may be that all of its former neighbors need to invoke $\fix$, which is too costly. 

Despite that, one may observe that a notion of \emph{deletion-closed preparing functions} could be defined in a similar manner. However, we suspect that such a notion has a much more limited applicability. Formally, we would like that for each $v\in[n]$, whenever $N_v^{old}\supseteq N_v^{new}$ it holds that $\prep(v,N_v^{old},N_v^{new},L_v^{old})=L_v^{old}$. Note that as with the intuition about maximal matchings, in LFLs whose preparing functions are either insertion-closed or deletion-closed, it is sufficient that only one endpoint of an edge becomes dirty to fix the labeling and ensure edge-correctness.  We emphasize that crucially in such a scenario with deletion-closed preparing functions there is no additional edge-correctness constraint to fix, and therefore node deletions may be handled without communication! Since we observe this property only in one example below, and as it seems to us more limiting, we chose not to study it further.

Nevertheless, for some problems, we can show that we can also handle node deletions efficiently. This happens when we can bound the number of neighbors that need to invoke $\fix$ after a node deletion (imagine the maximal matching example, where a deleted node could have been matched to at most a single neighbor, and so at most one former neighbor really needs to invoke $\fix$). In the paper, we address node deletions where applicable.

\end{document}